\newcommand{\tr}{{\mathrm{Tr}}}
\newcommand{\gf}{{\mathbb{F}}}
\newcommand{\s}{{\mathbb{S}}}
\newtheorem{theorem}{Theorem}
\newtheorem{lemma}[theorem]{Lemma}
\newtheorem{proposition}[theorem]{Proposition}
\newtheorem{corollary}[theorem]{Corollary}
\newtheorem{definition}{Definition}
\newtheorem{example}{Example}
\newtheorem{remark}{Remark}
\begin{document}
\title{On the differential spectrum of a class of APN power functions over odd characteristic finite fields and their $c$-differential properties}
	\author{\IEEEauthorblockN{Haode Yan\IEEEauthorrefmark{1}, Sihem Mesnager\IEEEauthorrefmark{2}~\IEEEmembership{Member,~IEEE}, and Xiantong Tan\IEEEauthorrefmark{1}}\\
		\IEEEauthorblockA{\IEEEauthorrefmark{1}School of Mathematics, Southwest Jiaotong University, Chengdu, China. }\\
		\IEEEauthorblockA{\IEEEauthorrefmark{2}Department of Mathematics, University of Paris VIII, 93526 SaintDenis, with University Sorbonne Paris Cit{\'{e}}, LAGA, UMR 7539, CNRS, 93430 Villetaneuse, and also with the T{\'{e}}l{\'{e}}com Paris, France. }\\
		\IEEEauthorblockA{\href{mailto: hdyan@swjtu.edu.cn}{hdyan}@swjtu.edu.cn, 
		\href{mailto: smesnager@univ-paris8.fr}{smesnager}@univ-paris8.fr, \href{mailto: Xiantong@my.swjtu.edu.cn}{Xiantong}@my.swjtu.edu.cn,}\\
		\IEEEauthorblockA{Corresponding Author: Sihem Mesnager \quad Email: smesnager@univ-paris8.fr}}
%\date{\today}
\maketitle
\begin{abstract}
The differential spectrum of a cryptographic function is of important interest for estimating the resistance of the involved vectorial function to some variances of differential cryptanalysis. However, it is difficult to completely determine a power function's differential spectrum. The research in this direction is active and goes beyond the Boolean case. Only three classes of Almost Perfect Nonlinear (for short, APN) power functions over odd characteristic finite fields have been investigated in the literature, and their differential spectra were determined. In the present article, we concentrate in studying the differential and the $c$-differential uniformity (for some $c\in \gf_{p^n}\setminus\{0,1\}$) and their related differential spectrum (resp. $c$-differential spectrum) of the power functions $F(x)=x^{d}$ over the finite field $\gf_{p^n}$ of order $p^n$ (where $p$ is an odd prime) for $d=\frac{p^{n}-3}{2}$. We emphasize that, by focusing on the power functions $x^d$ with even $d$ over $\gf_{p^n}$ ($p$ odd), the power functions $F$ we consider are APN which are of the lowest differential uniformity and the nontrivial differential spectrum.
The differential uniformity of the power function $F(x)=x^{\frac{p^{n}-3}{2}}$ over the finite field $\gf_{p^n}$ of order $p^n$ (where $p$ is an odd prime), was studied by Helleseth and Sandberg in 1997, where $p^n\equiv3\pmod{4}$ is an odd prime power with $p^n>7$. It was shown that $F$ is PN when $p^n=27$, APN when $5$ is a nonsquare in $\gf_{p^n}$, and differentially $3$-uniform when $5$ is a square in $\gf_{p^n}$. In this paper, by investigating some equation systems and certain character sums over $\gf_{p^n}$, the differential spectrum of $F$ is completely determined. Moreover, we examine the extension of the so-called $c$-differential uniformity by investigating the $c$-differential properties of $F$. Specifically, an upper bound of the $c$-differential uniformity of $F$ is given, and its $c$-differential spectrum is considered in the case where $c=-1$. Despite vast recent literature since the introduction of this notion, we emphasize that compelling cryptographic motivation directly related to attacks on conventional symmetric cryptosystems was not performed yet (but its impact is justified in some instances of specific ciphers for $p=2$). Nevertheless, it remains theoretically defensible since the concept extends the differential uniformity. Finally, we emphasize that, throughout our study of the differential spectrum of the considered power functions, we provide methods for evaluating sums of specific characters with connections to elliptic curves and for determining the number of solutions of specific systems of equations over finite fields. \end{abstract}

{\bf Keywords:}  Power function, APN function, Differential uniformity, Differential spectrum, Elliptic curve, $c$-differential uniformity,

{\bf Mathematics Subject Classification:} 06E30, 11T06, 94A60, 94D10.
\section{Introduction}

Substitution boxes (S-boxes for short) play a crucial role in the field of symmetric block ciphers. Let $ {\mathbb{F}_{q}} $ be the finite field with $ q $ elements, where $q$ is a prime power. For a cryptographic function $F$ from $\gf_{q}$ to itself, the main tools to handle $F$ regarding the differential attack 
introduced by Biham and Shamir \cite{BS} 
are the difference distribution table (DDT for short) and the differential uniformity introduced by Nyberg \cite{Nyberg93} in 1994. For any $ a, b \in \gf_{q}$, the DDT entry at point $ (a,b) $, denoted by $ \delta_F(a,b) $, is defined as \[\delta_F(a,b)=\big{|} \{x \in \gf_{q}: ~F(x+a)-F(x)=b\} \big{| },\]
where $ \big{|} S \big{|} $ denotes the cardinality of the finite set $S$.
The differential uniformity of the function $ F $, denoted by $ {\Delta _F} $, is defined as
\[{\Delta _F}=\max\{\delta_F(a,b): ~a \in {\mathbb{F}_{q}^*}, b \in \mathbb{F}_{q}\} ,\]
where $\gf_{q}^*=\gf_{q}\setminus\{0\}$. In the Boolean case (that is $p=2$), $F$ can be used as an S-box inside a symmetric cryptosystem (namely, stream and block ciphers); the smaller the value $\Delta_F$ is, the better the contribution of $ F $ to the resistance against differential attack. When $\Delta_{F}=1$, $F$ is said to be \emph{perfect nonlinear} (PN for short) function. 
Whereas, when $\Delta_F=2$, $F$ is called an \emph{almost perfect nonlinear} (APN for short) function. Note that PN functions over even characteristic finite fields do not exist. PN and APN functions play an important role in theory and application (up to now, the applicative aspects in cryptography concern only the Boolean case). A survey on the differential uniformity of vectorial Boolean functions can be found in the chapter of Carlet~\cite{CH1} and the recent book \cite{CarletBook}. Recent progress on cryptographic functions with low differential uniformity can be found in \cite{BD,BCHLS,BCL,CM,DO,DOB1,DOB2,HRS,HS,PT,QTLG,QTTL,TCT,TZ,ZHS,ZKW,ZW} and the references therein.

Power functions with low differential uniformity serve as good candidates for the
design of S-boxes not only because of their strong resistance to differential attacks but also for the usually low implementation cost in hardware. When $F$ is a power function, i.e., $F(x)=x^d$ for an integer $d$, one easily sees that $\delta_F(a,b)=\delta_F(1,{b/{a^d}})$ for all $a\in \gf_{q}^*$ and $b\in \gf_{q}$.
That is to say, the differential properties of $F$ are wholly determined by the values of $\delta_F(1,b)$ as $b$ runs through $\gf_{q}$. Their resistance to the standard differential attack attracted attention and was investigated. The notion of the differential spectrum of a power function was, in fact, firstly proposed by Blondeau, Canteaut, and Charpin in \cite{BCC} as follows.
\begin{definition}\label{def1}
	Let $F(x)=x^d$ be a power function over $\gf_{q}$ with differential uniformity $ \Delta_F $. Denote
	\[\omega_i= \big{|} \left\{b\in \gf_{q}: \delta_F(1, b)=i\right\} \big{|} ,\,\,0\leq i\leq \Delta_F.\]
	The differential spectrum of $F$ is defined to be the multi-set
	\[ DS_F =\left\{\omega_i>0:0\leq i \leq \Delta_F \right\}.\]
\end{definition} 
It can be seen that the value distribution of DDT of a power function can be deduced via its differential spectrum. Moreover, it has been shown in  \cite{BCC}  that the elements in the differential spectrum of $F$ satisfy the two following helpful identities in the set $\mathbb{N}$ of the positive integers.

\begin{equation}\label{omegaiomega}
	\sum_{i=0}^{\Delta _{F}}\omega _{i}=\sum_{i=0}^{\Delta _{F}}(i\cdot \omega _{i})=q.
\end{equation}
The following lemma plays an important role
in determining the differential spectrum of $F$.

\begin{lemma}\label{identicalequation}
	(\cite{HRS}, Theorem 10)
	With the notation introduced in Definition \ref{def1}, let $N_{4}$ denote the
	number of solutions $(x_{1},x_{2},x_{3},x_{4})\in (\gf_{q})^{4}$ of the equation system
	\begin{equation*}
		\Bigg\{ \begin{array}{ll}
			{x_1} - {x_2} + {x_3} - {x_4} &= 0\\
			x_1^d - x_2^d + x_3^d - x_4^d &= 0.
		\end{array}
	\end{equation*}
	Then we have 
	\begin{equation}\label{i^2omega}
		\sum_{i=0}^{\Delta _{F}}i^{2}\omega_{i}=\frac{N_{4}-q^2}{q-1}.
	\end{equation}
\end{lemma}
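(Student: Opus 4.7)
The plan is to count $N_4$ by interpreting the system as a sum over difference data, and then to exploit the power-function symmetry $\delta_F(a,b)=\delta_F(1,b/a^d)$ to relate the resulting sum to the quantities $\omega_i$.

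First I would reparametrize: given a solution $(x_1,x_2,x_3,x_4)$, set $a=x_1-x_2$ and $b=x_1^d-x_2^d$. The two equations of the system are then equivalent to $x_4-x_3=a$ and $x_4^d-x_3^d=b$. Consequently
\begin{equation*}
N_4 \;=\; \sum_{a\in\gf_q}\sum_{b\in\gf_q}\bigl|\{(x_1,x_2):x_1-x_2=a,\,x_1^d-x_2^d=b\}\bigr|\cdot\bigl|\{(x_3,x_4):x_4-x_3=a,\,x_4^d-x_3^d=b\}\bigr|.
\end{equation*}
For $a\neq 0$, eliminating $x_1=x_2+a$ (resp.\ $x_4=x_3+a$) shows that each inner cardinality equals the number of $x$ with $(x+a)^d-x^d=b$, which is exactly $\delta_F(a,b)$. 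For $a=0$, the first equation forces $x_1=x_2$, which forces $b=0$, and gives $q$ pairs; the analogous analysis holds for $(x_3,x_4)$. Hence
\begin{equation*}
N_4 \;=\; q^2 \;+\; \sum_{a\in\gf_q^{*}}\sum_{b\in\gf_q}\delta_F(a,b)^2.
\end{equation*}

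Next I would invoke the power-function identity $\delta_F(a,b)=\delta_F(1,b/a^d)$, valid for all $a\in\gf_q^{*}$ and $b\in\gf_q$. For each fixed $a\in\gf_q^{*}$ the map $b\mapsto b/a^d$ is a bijection of $\gf_q$, so
\begin{equation*}
\sum_{b\in\gf_q}\delta_F(a,b)^2 \;=\; \sum_{c\in\gf_q}\delta_F(1,c)^2 \;=\; \sum_{i=0}^{\Delta_F} i^2\,\omega_i,
\end{equation*}
by grouping the values $c\in\gf_q$ according to $\delta_F(1,c)=i$ and using Definition~\ref{def1}. Summing over the $q-1$ nonzero values of $a$ yields
\begin{equation*}
N_4 \;=\; q^2 \;+\; (q-1)\sum_{i=0}^{\Delta_F} i^2\,\omega_i,
\end{equation*}
which rearranges into the desired identity \eqref{i^2omega}.

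No step is particularly delicate; the only point that requires care is keeping track of the degenerate case $a=0$ (which isolates the $q^2$ term) and confirming that the double counting on the right-hand side uses the number of $x$ in the definition of $\delta_F(a,b)$ rather than the number of ordered pairs. Once the bookkeeping is done, the whole argument reduces to the power-function symmetry together with the definition of the $\omega_i$.
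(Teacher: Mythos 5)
Your proof is correct: the reparametrization via $a=x_1-x_2$, $b=x_1^d-x_2^d$ gives $N_4=q^2+\sum_{a\in\gf_q^*}\sum_{b\in\gf_q}\delta_F(a,b)^2$, and the power-function symmetry $\delta_F(a,b)=\delta_F(1,b/a^d)$ together with the definition of the $\omega_i$ yields the identity. The paper itself offers no proof of this lemma, simply citing \cite{HRS}, Theorem 10; your argument is the standard counting proof underlying that reference, so there is no substantive divergence to report.
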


However, it is difficult to determine a power function's differential spectrum ultimately. The differential spectra are known only for a few classes of power functions over finite fields of odd characteristics. We summarize them in Table \ref{t1}. For $p=2$, we refer to \cite[Chapter 11]{CarletBook}. Among the known results, there are only three classes of APN functions. More precisely, the three classes of APN functions are listed below.
\begin{enumerate}
	\item[(1)] $x^{\frac{3^{n+1}-1}{4}}$ over  $\gf_{3^n}$, where $n$ is odd, which is reported by \cite{CHNC},
	\item[(2)] $x^{\frac{5^k+1}{2}}$ over $\gf_{5^n}$, where $\gcd(n,k)=1$, which is reported by \cite{CHNC}, and
	\item[(3)] $x^{3^n-3}$ over $\gf_{3^n}$, where $n$ is odd, which is reported by \cite{XZLH}.
\end{enumerate}

\begin{table}[t]
	\renewcommand{\arraystretch}{1.7}
	\caption{Power functions $F(x)=x^{d}$ over $\gf_{p^{n}}$ ($p$ is odd) with known differential spectrum}
	\label{t1}
	\centering
		\begin{tabular}{|c|c|c|c|}
			\hline
			$d$ & Conditions & $\Delta_{F}$ & Ref. \\ 
			\hline\hline
			$2\cdot 3^{\frac{n-1}2}+1$ & $n$ is odd & 4 & \cite{Dobbertin2001}\\
			\hline
			$\frac{p^k+1}2$ & $\gcd(n,k)=e$ & $\frac{p^e-1}2$ or $p^e+1$ & \cite{CHNC}\\
			\hline
			$\frac{p^n+1}{p^m+1}+\frac{p^n-1}{2}$ & $p\equiv3\pmod4$, $m\mid n$ and $n$ odd & $\frac{p^m+1}2$  & \cite{CHNC}\\
			\hline
			$p^{2k}-p^k+1$ & $\gcd(n,k)=e$,\ $\frac ne$ is odd & $p^e+1$ & \cite{Lei2021,Yan}\\
			\hline
			$p^n-3$ & any $n$ & $\leq$5 & \cite{XZLH,YXLHXL}\\
			\hline
			$p^m+2$ &  $p>3$, $n=2m$ & 4  & \cite{MXLH}\\
			\hline
			$\frac{p^n+3}2$ & $p\geq 5$, $p^n\equiv1\pmod4$ & 3 & \cite{JLLQ}\\\hline
			$\frac{5^n-3}2$ & any $n$ & 4 or 5 & \cite{YL}\\\hline
			$\frac{p^n-3}2$ & $p^n\equiv3\pmod4$, $p^{n}>7$ and $p^n\ne27$ & 2 or 3 & This paper\\\hline
	\end{tabular}
\end{table}

Let $F(x)=x^{d}$ be the power function over $\gf_{p^{n}}$, where $p$ is a prime, $p^n\equiv 3\pmod4$, $p^{n}>7$ and $d=\frac{p^{n}-3}{2}$. In \cite{HS}, Helleseth and Sandberg proved that $F$ is PN when $p^n=27$, is APN when $\chi(5)=-1$, and is differentially $3$-uniform when $\chi(5)=1$. Herein and hereafter, $\chi$ denotes the quadratic multiplicative character over $\gf_{p^n}$. This paper mainly studies the differential properties of $F$ over $\gf_{p^n}$. The differential spectrum of $F$ is determined, {which is} the fourth class of APN function with a known differential spectrum. We also studied the $c$-differential properties of $F$. The rest of this paper is organized as follows. Section \ref{A} introduces some basic notions about quadratic multiplicative character and results on quadratic multiplicative character sums,
which will play a crucial role in deriving our main results. In Section \ref{B}, we determine the number of solutions of a certain system of equations over finite fields. The differential spectrum of $F$ is investigated in Section \ref{C}, and some examples are given. In Section \ref{D}, we study the $c$-differential uniformity of $F$ and determine its $(-1)$-differential spectrum. Section \ref{E} concludes this paper.

%\begin{theorem}\label{du}
%	(\cite{HS}, Theorem 2)
%	Let $p$ is a prime, $p^n\equiv 3\pmod4$, $d=\frac{p^{n}-3}{2}$ and let $F(x)=x^{d}$ be a mapping over $\gf_{p^{n}}$. Then for $p^{n}>7$,
%	\begin{equation*}
%		\Delta_{F}= \left\{\begin{array}{ll}
%			1,~~& \mathrm{if}\ p^n=27;\\ 
%			2,~~& \mathrm{if}\ \chi(5)=-1~(\mathrm{i.e.},p\equiv 3,7\ (\bmod\,20));\\ 
%			3,~~& \mathrm{if}\ \chi(5)=1~~~(\mathrm{i.e.},p\equiv 11,19\ (\bmod\,20)).
%		\end{array}\right. 
%	\end{equation*}
%	where $\chi(x)=x^{\frac{p^n-1}{2}},\ x\in \gf_{{p^{n}}}$.
%\end{theorem}
\section{On quadratic character sums}\label{A}
 Let $\gf_{p^n}$ be the finite field with $p^n$ elements, where $p$ is an odd prime, and $n$ is a positive integer. And let $\gf_{p^n}^{*}=\gf_{p^n}\setminus\{0\}$. 
 This section mainly introduces some basic results on quadratic multiplicative character sums $\chi$ over $\gf_{p^n}$, i.e.,
\begin{equation*}
	\chi(x)=x^{\frac{p^{n}-1}{2}}=\left\{\begin{array}{ll}
		1, & \hbox{if $x$ is a square,}\\ 
		0, & \hbox{if $x=0$;}\\
		-1, & \hbox{if $x$ is a nonsquare.}
	\end{array}\right.
\end{equation*}
These exponential sums will appear naturally in our study of the differential spectra of the function $F(x)=x^{\frac{p^{n}-3}{2}}$ over the finite field $\gf_{p^n}$. These connections are highlighted below with some results in the case  $p^n \equiv 3 (\mathrm{mod}~4)$.\\

Let $\gf_{{p^{n}}}[x]$ be the polynomial ring over $\gf_{{p^{n}}}$. We consider the sum involving the quadratic multiplicative sums of the form
\begin{equation*}
	\sum_{x\in \gf_{{p^{n}}}}\chi(f(x))
\end{equation*}
with $f(x)\in \gf_{{p^{n}}}[x]$. The case of $\deg(f(x))=1$ is trivial, and for $\deg(f(x))=2$, the following explicit formula was established in \cite{FF}.

\begin{lemma}\label{2}
	(\cite{FF}, Theorem 5.48)
	Let $f(x)=a_{2}x^{2}+a_{1}x+a_{0}\in \gf_{{p^{n}}}[x]$ with $p$ odd and $a_{2} \neq 0$. Put $\Delta=a_{1}^{2}-4a_{0}a_{2}$ and let $\chi$ be the quadratic character of $\gf_{{p^{n}}}$. Then
	\begin{equation*}
		\sum_{x\in \gf_{{p^{n}}}}\chi(f(x))=\left\{\begin{array}{ll}
			-\chi(a_{2}), & \hbox{if $\Delta\neq0$,}\\ 
			(p^{n}-1)\chi(a_{2}),  & \hbox{if $\Delta=0$.}
		\end{array}\right.
	\end{equation*}
\end{lemma}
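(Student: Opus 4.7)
The plan is to reduce the sum to a canonical form by completing the square. Since $p$ is odd and $a_2 \neq 0$, I would write $4a_2 f(x) = (2a_2 x + a_1)^2 - \Delta$. Using the multiplicativity of $\chi$ together with $\chi(4)=1$ and $\chi((4a_2)^{-1}) = \chi(a_2)$ (as $\chi$ takes values in $\{0,\pm 1\}$), this gives $\chi(f(x)) = \chi(a_2)\,\chi\bigl((2a_2 x + a_1)^2 - \Delta\bigr)$. Because $y = 2a_2 x + a_1$ is an affine bijection of $\gf_{p^n}$, the original sum reduces to $\chi(a_2)\sum_{y\in\gf_{p^n}}\chi(y^2 - \Delta)$. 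When $\Delta = 0$, this is immediate: $\chi(y^2) = 1$ for each of the $p^n - 1$ nonzero $y$ and vanishes at $y=0$, yielding $\chi(a_2)(p^n-1)$ as claimed.

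The substance of the lemma lies in the case $\Delta \neq 0$, where I must show that $S := \sum_{y \in \gf_{p^n}}\chi(y^2 - \Delta) = -1$. I would reparametrize via $s = y^2$, using the fact that the number of $y$ with $y^2 = s$ equals $1 + \chi(s)$ under the convention $\chi(0) = 0$. This gives
\[
S \;=\; \sum_{s \in \gf_{p^n}} (1+\chi(s))\chi(s-\Delta) \;=\; \sum_{s}\chi(s-\Delta) \;+\; \sum_{s}\chi\bigl(s(s-\Delta)\bigr).
\]
The first sum vanishes by the orthogonality identity $\sum_{t\in\gf_{p^n}}\chi(t) = 0$ (after the translation $t = s - \Delta$). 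For the second sum, on $s \neq 0$ I would factor $\chi(s(s-\Delta)) = \chi(s^2)\chi(1 - \Delta s^{-1}) = \chi(1 - \Delta s^{-1})$; since $s \mapsto 1 - \Delta s^{-1}$ bijects $\gf_{p^n}^{*}$ onto $\gf_{p^n}\setminus\{1\}$, the sum equals $\sum_{u \neq 1}\chi(u) = -\chi(1) = -1$. Hence $S = -1$, and combining with the prefactor $\chi(a_2)$ yields the desired value $-\chi(a_2)$.

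The main obstacle is the evaluation of $\sum_{s}\chi(s(s-\Delta))$, which is essentially a Jacobsthal-type identity; everything else (completing the square, multiplicativity, and orthogonality of $\chi$) amounts to routine bookkeeping. Once this sum is in hand, the two cases of the lemma follow at once by combining the pieces above.
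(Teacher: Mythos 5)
Your argument is correct: completing the square to reduce the sum to $\chi(a_2)\sum_{y}\chi(y^2-\Delta)$, and then evaluating the remaining Jacobsthal-type sum $\sum_{s}\chi\bigl(s(s-\Delta)\bigr)=-1$ via the substitution $u=1-\Delta s^{-1}$, is sound in every step, including the degenerate case $\Delta=0$. Note that the paper does not prove this lemma at all — it is quoted from Lidl and Niederreiter (Theorem 5.48) — and your proof is essentially the standard textbook argument for that result, so there is nothing to reconcile beyond observing that you have supplied the proof the paper delegates to its reference.
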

For $\deg(f(x))\geq 3$, it is challenging to derive an explicit a general formula for the character sum $\sum\limits_{x\in \gf_{p^{n}}}\chi(f(x))$. However, when $\deg(f(x))=3$, such a sum can be computed by considering rational points of elliptic curves. More specifically, for a cubic function, we denote $\Gamma _{p,n}$ as
\begin{equation}
	\label{Gamma}
	\Gamma _{p,n}=\sum_{x\in \gf_{p^{n}}}\chi(f(x)).
\end{equation}
To evaluate  $\Gamma _{p,n}$, we shall use some elementary concepts from the theory of elliptic curves. Most of the terminologies and notation we use come from \cite{SilveEC}. Let $E/\gf_{p}$ be the elliptic curve over $\gf_{p}$:
\begin{equation*}
	E:y^{2}=f(x).
\end{equation*}
Let $N_{p,n}$ denote the number of $\gf_{p^{n}}$-rational points (remember the extra point at infinity) on the curve $E/\gf_{p}$. From Subsection 1.3 in [\cite{SilveEC}, P. 139, Chap. V] and Theorem 2.3.1 in [\cite{SilveEC}, P. 142, Chap. V], $N_{p,n}$  can be computed from $\Gamma _{p,n}$. More precisely,  for every $n\geq1$,
\begin{equation*}
	N_{p,n}=p^n+1+\Gamma _{p,n}.
\end{equation*}
Moreover, 
\begin{equation}\label{gamma}
	\Gamma _{p,n}=-\alpha^{n}-\beta^{n},
\end{equation}
where $\alpha$ and $\beta$ are the complex solutions of the quadratic equation $T^{2}+\Gamma_{p,1}T+p=0$.

We  are now interested in two specific character sums $\lambda^{(1)}_{p,n}$ and $\lambda^{(2)}_{p,n}$. Let
\begin{align*}
	\lambda^{(1)}_{p,n}&=\sum_{x\in \gf_{p^{n}}}\chi(x(x^2+4x-1))
\end{align*}
and
\begin{align}\label{sum2}
	\lambda^{(2)}_{p,n}&=\sum_{x\in \gf_{p^{n}}}\chi(x(x^2-2x+5)).
\end{align}

 The former sums will be helpful in Section \ref{B}  since, as we shall see, the computation of the differential spectrum of the power function $x^{\frac{p^{n}-3}{2}}$ over $\gf_{{p^{n}}}$ boils down to evaluating those character sums.

In the following examples, we give the exact values of $\lambda^{(1)}_{p,n}$ and $\lambda^{(2)}_{p,n}$, 
respectively,  over prime fields (for specific values of $p$).
\begin{example}
\begin{itemize}
\item Let $p=7$. For $n=1$, one has $\lambda^{(1)}_{7,1}=2$. The quadratic equation $T^{2}+2T+7=0$ has two complex roots $-1\pm \sqrt{-6}$. By \eqref{gamma},  we have
	\begin{align*}
		\lambda^{(1)}_{7,n}&=-(-1+\sqrt{-6})^{n}-(-1-\sqrt{-6})^{n}\\
		&=(-1)^n\cdot2\sum\limits_{k = 0}^{\left\lfloor {\frac{n}{2}} \right\rfloor } {{{\left( { - 1} \right)}^{k+1}} \binom{n}{2k} {6^{k}}}.
	\end{align*}

\item
	Let $p=11$. For $n=1$, one has  $\lambda^{(2)}_{11,1}=0$. The quadratic equation $T^{2}+11=0$ has two complex roots $\pm \sqrt{-11}$. Similarly, we have 
	\begin{align*}
		\lambda^{(2)}_{11,n}&=-(\sqrt{-11})^{n}-(-\sqrt{-11})^{n},\\
		&=\left\{\begin{array}{ll}
			(-1)^{\frac{n}{2}+1}\cdot 2\cdot {11}^{\frac{n}{2}}, & \hbox{ $n$ is even,}\\
			0, & \hbox{ $n$ is odd.}
		\end{array}\right.
	\end{align*} 
	\end{itemize}
\end{example}
In addition, we have the following bound on $\lambda^{(i)}_{p,n}$ for $i=1,2$.
\begin{theorem}[\cite{SilveEC}, Corollary 1.4]\label{bound}
	\label{bound}
With the notation as above, we have
	\[|\lambda^{(i)}_{p,n}|\leq 2p^{\frac{n}{2}}\]
	for $i=1,2$.
	\end{theorem}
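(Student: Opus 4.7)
The plan is to recognize each $\lambda^{(i)}_{p,n}$ as a special instance of the cubic character sum $\Gamma_{p,n}$ from \eqref{Gamma} and reduce the inequality to the Hasse--Weil bound for the associated elliptic curve. First, set $f_1(x)=x(x^2+4x-1)$ and $f_2(x)=x(x^2-2x+5)$, so that $\lambda^{(i)}_{p,n}=\sum_{x\in\gf_{p^n}}\chi(f_i(x))$ matches the template of \eqref{Gamma}. A short discriminant computation gives $\mathrm{disc}(f_1)=20$ and $\mathrm{disc}(f_2)=-400$, both nonzero whenever $p\neq 2,5$. Hence the projective closure of the affine curve $y^2=f_i(x)$ is a smooth elliptic curve $E_i$ defined over $\gf_p$, to which the point-counting framework recalled just before \eqref{gamma} applies.

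Next, invoke the identity $N_{p,n}=p^n+1+\Gamma_{p,n}$ between the number of $\gf_{p^n}$-rational points on $E_i$ and the character sum. The target bound $|\lambda^{(i)}_{p,n}|\leq 2p^{n/2}$ is then equivalent to $|N_{p,n}-(p^n+1)|\leq 2p^{n/2}$, which is exactly Hasse's theorem for $E_i$ over $\gf_{p^n}$, cited here as Corollary~1.4 of \cite{SilveEC}. One may equivalently argue directly from \eqref{gamma}: Hasse's bound at $n=1$ yields $|\Gamma_{p,1}|\leq 2\sqrt{p}$, so the discriminant $\Gamma_{p,1}^{\,2}-4p$ of $T^{2}+\Gamma_{p,1}T+p$ is non-positive, forcing its roots $\alpha,\beta$ to be complex conjugates with $\alpha\beta=p$, hence $|\alpha|=|\beta|=\sqrt{p}$. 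The triangle inequality then gives
\begin{equation*}
|\lambda^{(i)}_{p,n}|=|\alpha^{n}+\beta^{n}|\leq|\alpha|^{n}+|\beta|^{n}=2p^{n/2},
\end{equation*}
as claimed.

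The only genuine content is the verification that the cubics $f_1,f_2$ remain separable, which excludes at most the small exceptional primes $p=2,5$ that fall outside the regime of interest; beyond this routine check, the estimate is a black-box application of Hasse's theorem to $E_1$ and $E_2$, and this is presumably how the authors dispatch the proof. There is accordingly no serious obstacle: the intellectual work has already been done in Section~\ref{A} by setting up the dictionary between $\Gamma_{p,n}$, the factorisation \eqref{gamma}, and elliptic curve point counts.
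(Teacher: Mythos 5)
Your proposal is correct and coincides with the paper's treatment: the paper offers no independent argument but simply cites Hasse's bound (\cite{SilveEC}, Corollary 1.4) for the elliptic curves $y^{2}=x(x^{2}+4x-1)$ and $y^{2}=x(x^{2}-2x+5)$, which is exactly the reduction you carry out via $N_{p,n}=p^{n}+1+\Gamma_{p,n}$ and the eigenvalue factorization \eqref{gamma}. Your added checks (nonzero discriminants $20$ and $-400$, exclusion of $p=2,5$, which indeed lie outside the regime $p^{n}\equiv 3\pmod 4$) are sound and only make explicit what the citation presupposes.
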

We define 
\begin{align}\label{sum}
	\lambda_{p,n}:=\lambda^{(1)}_{p,n}+\lambda^{(2)}_{p,n}=\sum_{x\in \gf_{p^{n}}}\chi(x(x^2+4x-1))+\sum_{x\in \gf_{p^{n}}}\chi(x(x^2-2x+5)),
\end{align}
{which will be used later}.

At the end of this section, we give the following results on certain character sums. 
\begin{lemma}\label{sums}
	Let $p^n \equiv 3 (\mathrm{mod}~4)$, we have

	\begin{align*}
		&(i)\sum_{x\in\gf_{p^n}}\chi(x(x^2-1))=0,\\
		&(ii) \sum_{x\in\gf_{p^n}}\chi(x(x^2-1)(x^2+4x-1))=0, and\\
		&(iii)\sum_{x\in \gf_{{p^{n}}}}\chi((x^{2}-1)(x^{2}+4x-1))=\lambda^{(2)}_{p,n}-1,
	\end{align*}
where $\lambda^{(2)}_{p,n}$ is defined in \eqref{sum2}.
\end{lemma}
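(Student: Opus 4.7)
The plan is to prove each identity by a carefully chosen involutive or Möbius substitution, exploiting that $p^n\equiv 3\pmod 4$ forces $\chi(-1)=-1$. For (i), the substitution $x\mapsto -x$ sends $x(x^2-1)$ to $-x(x^2-1)$, so $\sum_x\chi(x(x^2-1))=\chi(-1)\sum_x\chi(x(x^2-1))=-\sum_x\chi(x(x^2-1))$, forcing the sum to vanish. For (ii), set $g(x)=x(x^2-1)(x^2+4x-1)$; the involution $x\mapsto -1/x$ of $\gf_{p^n}^*$ satisfies, by a direct computation,
\begin{equation*}
g\!\left(-\tfrac{1}{x}\right)=-\frac{(x^2-1)(x^2+4x-1)}{x^5}.
\end{equation*}
Applying $\chi$ and using $\chi(-1)=-1$ together with $\chi(x^{-5})=\chi(x)$ yields $\chi(g(-1/x))=-\chi(g(x))$, so the sum over $\gf_{p^n}^*$ equals its negative and vanishes (the $x=0$ contribution is already zero since $g(0)=0$).

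Identity (iii) is the main obstacle and calls for a two-step reduction. With $h(x)=(x^2-1)(x^2+4x-1)$, set $y=x-1/x$ for $x\in\gf_{p^n}^*$; then $x^2-1=xy$ and $x^2+4x-1=x(y+4)$, so $h(x)=x^2y(y+4)$ and $\chi(h(x))=\chi(y(y+4))$. Each $y\in\gf_{p^n}$ has $1+\chi(y^2+4)$ preimages under $x\mapsto x-1/x$ (the solutions of $x^2-yx-1=0$, all nonzero in odd characteristic), hence
\begin{equation*}
\sum_{x\in\gf_{p^n}^*}\chi(h(x))=\sum_{y\in\gf_{p^n}}\chi(y(y+4))+\sum_{y\in\gf_{p^n}}\chi(y(y+4)(y^2+4)).
\end{equation*}
By Lemma~\ref{2} applied to $y^2+4y$ (discriminant $16\ne 0$), the first sum equals $-1$, which exactly cancels the $x=0$ contribution $\chi(h(0))=\chi(1)=1$. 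It therefore remains to show $\sum_y\chi(y(y+4)(y^2+4))=\lambda^{(2)}_{p,n}-1$.

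For this last step I would apply the Möbius bijection $s=1+4/y$ from $\gf_{p^n}^*$ onto $\gf_{p^n}\setminus\{1\}$. A short computation yields
\begin{equation*}
y(y+4)(y^2+4)=\frac{64\,s(s^2-2s+5)}{(s-1)^4},
\end{equation*}
and because $64$ and $(s-1)^4$ are squares, $\chi$ collapses to $\chi(s(s^2-2s+5))$. Summing over $s\ne 1$ and subtracting the excluded $s=1$ contribution $\chi(1\cdot 4)=1$ from $\lambda^{(2)}_{p,n}$ gives exactly $\lambda^{(2)}_{p,n}-1$, completing (iii). The conceptual difficulty is identifying these successive substitutions: $y=x-1/x$ exploits the common structure of the two quadratic factors of $h$, while $s=1+4/y$ is the birational transformation that identifies the quartic model $Y^2=y(y+4)(y^2+4)$ with the Weierstrass elliptic curve $W^2=s(s^2-2s+5)$ underlying $\lambda^{(2)}_{p,n}$.
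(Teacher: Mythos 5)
Your proof is correct. Parts (i) and (ii) are essentially the paper's own argument: your involution $x\mapsto -1/x$ in (ii) is exactly the substitution $-x^{-1}$ the paper invokes, and your $x\mapsto -x$ in (i) is a trivial variant of the paper's $x\mapsto x^{-1}$; in both cases the sum picks up the factor $\chi(-1)=-1$ and vanishes. For (iii) you take a genuinely different route. The paper writes the summand as $\chi\bigl(\tfrac{x^{2}+4x-1}{x^{2}-1}\bigr)$ for $x\neq\pm1$, substitutes $u=\tfrac{x^{2}+4x-1}{x^{2}-1}$, and counts the fibre over each $u\neq1$ via the discriminant $4(u^{2}-2u+5)$ of $(u-1)x^{2}-4x-(u-1)=0$, so the sum becomes $\chi(1)+\sum_{u\neq1}(1+\chi(u^{2}-2u+5))\chi(u)$ and collapses to $\lambda^{(2)}_{p,n}-1$ in one stroke. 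You instead factor $h(x)=(x^{2}-1)(x^{2}+4x-1)=x^{2}y(y+4)$ through $y=x-1/x$ with fibre count $1+\chi(y^{2}+4)$, dispose of $\sum_{y}\chi(y(y+4))=-1$ by Lemma~\ref{2} (cancelling the $x=0$ term $\chi(h(0))=1$), and then use the M\"obius change $s=1+4/y$, under which $y(y+4)(y^{2}+4)=64\,s(s^{2}-2s+5)/(s-1)^{4}$, to land on $\lambda^{(2)}_{p,n}$ minus the excluded value $\chi(4)=1$ at $s=1$. I checked these identities and the boundary bookkeeping; they are all correct. The paper's version is shorter (a single degree-two substitution and an indicator-function expansion), while yours costs one extra substitution and an appeal to Lemma~\ref{2} but makes explicit the birational identification of the quartic model with the Weierstrass curve $w^{2}=s(s^{2}-2s+5)$ that underlies $\lambda^{(2)}_{p,n}$, which is conceptually illuminating given the elliptic-curve interpretation in Section~\ref{A}.
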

\begin{proof}
	\begin{enumerate}
		\item[(i).] We have $\sum\limits_{x\in\gf_{p^n}}\chi(x(x^2-1))=\sum\limits_{x\in\gf_{p^n}^*}\chi(x(x^2-1))$. Since $x^{-1}$ permutes $\gf_{{p^n}}^*$, let $y=x^{-1}$, then 
		\[\sum_{x\in\gf_{p^n}^*}\chi(x(x^2-1))=\sum_{y\in\gf_{p^n}^*}\chi(y^{-1}(y^{-2}-1))=\sum_{y\in\gf_{p^n}^*}\chi(y(1-y^2))=\chi(-1)\sum_{y\in\gf_{p^n}^*}\chi(y(y^2-1)).\]
		We conclude that $\sum\limits_{x\in\gf_{p^n}^*}\chi(x(x^2-1))=0$ since $\chi(-1)=-1$. Consequently, $\sum\limits_{x\in\gf_{p^n}}\chi(x(x^2-1))=0$.
		\item[(ii).] Note that $-x^{-1}$ permutes $\gf_{p^n}^*$. The proof is similar to that of (i), and we omit the details.
		%$\sum_{x\in\gf_{p^n}}\chi(x(x^2-1)(x^2+4x-1))=\sum_{x\in\gf_{p^n}^*}\chi(x(x^2-1)(x^2+4x-1))$. Since $-x^{-1}$ permutes $\gf_{{p^n}}^*$, let $y=-x^{-1}$, then 
		%\[\sum_{x\in\gf_{p^n}^*}\chi(x(x^2-1)(x^2+4x-1))=\sum_{y\in\gf_{p^n}^*}\chi(-y^{-1}(y^{-2}-1)(y^{-2}-4y^{-1}-1))\]
		%\[=\sum_{y\in\gf_{p^n}^*}\chi(-y(1-y^2)(1-4y-y^2))=\chi(-1)\sum_{y\in\gf_{p^n}^*}\chi(y(y^2-1)(y^2+4y-1)).\]
		%We conclude that $\sum_{x\in\gf_{p^n}^*}\chi(x(x^2-1))=0$ since $\chi(-1)=-1$.
		\item[(iii).] It is  not difficult to notice that
		\[
		\sum\limits_{x\in \gf_{{p^{n}}}}\chi((x^{2}-1)(x^{2}+4x-1))=\sum\limits_{x\in \gf_{p^n},x\neq\pm1}\chi((x^{2}-1)(x^{2}+4x-1))=\sum\limits_{x\in \gf_{p^n},x\neq\pm1}\chi(\frac{x^{2}+4x-1}{x^{2}-1}).\]
		Let $\frac{x^{2}+4x-1}{x^{2}-1}=u$. Then $u$ and $x$ satisfy 
		\begin{equation}\label{u-xrelation}
			(u-1)x^2-4x-(u-1)=0.
		\end{equation}
		Note  that $x=0$ if and only if $u=1$. When $u\ne1$, \eqref{u-xrelation} is a quadratic equation on the variable $x$, and its discriminant is $4(u^2-2u+5)$, which is not $0$ since $-1$ is a nonsquare. Therefore, for $u\neq1$, \eqref{u-xrelation} has two distinct solutions if and only if $\chi(u^{2}-2u+5)=1$. Then we have
		\begin{equation}\label{u2}
			\sum_{x\in \gf_{{p^{n}}}, x\neq\pm1}\chi(\frac{x^{2}+4x-1}{x^{2}-1})=\chi(1)+2\sum_{u\in \mathcal{U}}\chi(u),
		\end{equation}
		where
		\begin{equation*}
			\mathcal{U}:=\{u\in \gf_{{p^{n}}}|u\ne1, \chi(u^{2}-2u+5)=1\}.
		\end{equation*}
		%and we may adopt the convention that $\frac{1}{0}:=0$. 
		Furthermore,
		\begin{align*}
			&2\sum_{u\in \mathcal{U} }\chi(u)\\=&\sum_{u\ne 1}(1+\chi(u^{2}-2u+5))\chi(u)\\=&\sum_{u\in \gf_{{p^{n}}}}(1+\chi(u^{2}-2u+5))\chi(u)-2\chi(1)\\=&\sum_{u\in \gf_{{p^{n}}}}\chi(u)+\sum_{u\in \gf_{{p^{n}}}}\chi(u(u^2-2u+5))-2\chi(1)\\=&\lambda^{(2)}_{p,n}-2
		\end{align*}
		since $\sum\limits_{u\in \gf_{{p^{n}}}}\chi(u)=0$,  which completes the proof  thanks to $\eqref{u2}$.
	\end{enumerate}
\end{proof}

\section{On the number of solutions of certain equation systems}\label{B}

In this section we  are interested in  computing the  number of solutions $(y_1,y_2,y_3)\in(\gf_{p^n}^*)^3$ of the equation system
	\begin{equation}\label{n_4equationsystem}
	\left\{ \begin{array}{ll}
		{y_1} - {y_2} + {y_3} - 1 &= 0\\
		{y^d_1} - {y^d_2} + {y^d_3} - 1 &= 0,
	\end{array} \right.
\end{equation}
over $\gf_{p^n}$. Determining such a number will play an important role in computing the differential spectrum of $x^{d}$. Let $n_{(i,j,k)}$ denote the number of solutions $(y_1,y_2,y_3)$ of Equation \eqref{n_4equationsystem} with $(\chi(y_1), \chi(y_2), \chi(y_3))=(i,j,k)$, where $i,j,k\in\{\pm 1\}$. First, the following insertions in Proposition  \ref{property} come straightforwardly and  will therefore be stated without proof.

\begin{proposition}\label{property}With the notation as above, we have
	\begin{enumerate}
		\item[(i).] $(y_1,y_2,y_3)$ is a solution of Equation \eqref{n_4equationsystem} if and only if $(y_3,y_2,y_1)$ is a solution of Equation \eqref{n_4equationsystem}, then $n_{(1,1,-1)}=n_{(-1,1,1)}$ and $n_{(1,-1,-1)}=n_{(-1,-1,1)}$.
		\item[(ii).] $(y_1,y_2,y_3)$ is a solution of Equation \eqref{n_4equationsystem} if and only if $(\frac{y_1}{y_2},\frac{1}{y_2},\frac{y_3}{y_2})$ is a solution of Equation \eqref{n_4equationsystem}, then $n_{(-1,-1,-1)}=n_{(1,-1,1)}$.
		\item[(iii).] $(y_1,y_2,y_3)$ is a solution of Equation \eqref{n_4equationsystem} if and only if $(\frac{1}{y_1},\frac{y_3}{y_1},\frac{y_2}{y_1})$ is a solution of Equation \eqref{n_4equationsystem}, then $n_{(-1,1,1)}=n_{(-1,-1,-1)}$.
	\end{enumerate}
\end{proposition}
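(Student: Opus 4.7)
The plan is to exhibit, for each of the three parts, an explicit involution of the solution set of Equation \eqref{n_4equationsystem} that carries one sign pattern $(\chi(y_1),\chi(y_2),\chi(y_3))$ to the other. Because $\chi$ is multiplicative with values in $\{\pm 1\}$ on $\gf_{p^n}^*$, and because $y \mapsto y^d$ is a group homomorphism of $\gf_{p^n}^*$ (so that $(y_i/y_j)^d = y_i^d/y_j^d$), each of the rational maps proposed is immediately compatible with both the linear equation and the $d$-power equation of \eqref{n_4equationsystem}. Once such a bijection is in hand, the two associated counts $n_{(i,j,k)}$ are equal.

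For (i), the permutation $(y_1,y_2,y_3) \mapsto (y_3,y_2,y_1)$ fixes both equations of \eqref{n_4equationsystem} because each is symmetric in $y_1$ and $y_3$, and reading off the induced permutation of the sign triple yields the two identities claimed. For (ii), I would introduce the map $T_2 \colon (y_1,y_2,y_3) \mapsto (y_1/y_2,\, 1/y_2,\, y_3/y_2)$; since $y_2 \in \gf_{p^n}^*$, this lands in $(\gf_{p^n}^*)^3$ and a direct computation shows $T_2 \circ T_2 = \mathrm{id}$. Multiplying the first equation of \eqref{n_4equationsystem} by $1/y_2$ and the second by $1/y_2^d$ yields precisely the same system evaluated at $T_2(y_1,y_2,y_3)$, so $T_2$ permutes the solution set. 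Using $\chi(a)^2=1$ for $a \in \gf_{p^n}^*$ together with multiplicativity then converts the sign pattern $(-1,-1,-1)$ into $(1,-1,1)$, giving $n_{(-1,-1,-1)} = n_{(1,-1,1)}$. For (iii), the analogous involution $T_3 \colon (y_1,y_2,y_3) \mapsto (1/y_1,\, y_3/y_1,\, y_2/y_1)$ on $(\gf_{p^n}^*)^3$ sends the first equation of \eqref{n_4equationsystem} to $-1/y_1$ times itself and the second to $-1/y_1^d$ times itself, hence preserves the solution set; its induced action on $\chi$-signs sends $(-1,1,1)$ to $(-1,-1,-1)$, whence $n_{(-1,1,1)} = n_{(-1,-1,-1)}$.

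No substantive obstacle arises: the only bookkeeping is checking that the denominators are nonzero (guaranteed by the assumption $(y_1,y_2,y_3) \in (\gf_{p^n}^*)^3$), that each $T_i$ is genuinely an involution, and that the character signs transform as asserted. This is presumably why the paper flags these three identities as immediate and omits a formal proof.
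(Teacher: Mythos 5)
Your proposal is correct and follows exactly the route the paper intends (and leaves implicit): the three substitutions written into the statement are involutions of the solution set of \eqref{n_4equationsystem} in $(\gf_{p^n}^*)^3$, because $y\mapsto y^d$ is multiplicative and the transformed equations are the original ones up to the nonzero factors $1/y_2$, $1/y_2^d$ (resp.\ $-1/y_1$, $-1/y_1^d$), and tracking $\chi$ through these maps gives the stated equalities of the counts $n_{(i,j,k)}$. Your verification of the involution property, the nonvanishing of denominators, and the sign bookkeeping matches what the paper treats as immediate, so there is nothing to add.
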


In the case when $p$ is such that $p^n \equiv 3 (\mathrm{mod}~4)$ and when $d=\frac{p^n-3}{2}>0$, we have the following result.

\begin{theorem}\label{n_4value}
	Let $p^n \equiv 3 (\mathrm{mod}~4)$ and $d=\frac{p^n-3}{2}>0$ is a positive integer. The number of solutions $(y_1,y_2,y_3)\in(\gf_{p^n}^*)^3$ of Equation \eqref{n_4equationsystem}
	%\begin{equation*}
	%	\left\{ \begin{array}{ll}
	%		{y_1} - {y_2} + {y_3} - 1 &= 0\\
	%		{y^d_1} - {y^d_2} + {y^d_3} - 1 &= 0
	%	\end{array} \right.
	%\end{equation*}
is $\frac{1}{2}(5p^{n}+\lambda_{p,n}-13)$, where $\lambda_{p,n}$ was defined in \eqref{sum}.
\end{theorem}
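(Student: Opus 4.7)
The plan is to exploit the identity $y^d=\chi(y)/y$, valid for $y\in\gf_{p^n}^*$ because $d\equiv\frac{p^n-1}{2}-1\pmod{p^n-1}$. This rewrites the second equation of \eqref{n_4equationsystem} as $\chi(y_1)/y_1-\chi(y_2)/y_2+\chi(y_3)/y_3=1$. Eliminating $y_2=y_1+y_3-1$ from the first equation and case-splitting on the sign pattern $(\chi(y_1),\chi(y_2),\chi(y_3))\in\{\pm 1\}^3$, Proposition~\ref{property} collapses the eight counts $n_\eta$ to four orbit representatives, so that $N_4=B+4A+2C+D$ where $B=n_{(+,+,+)}$, $A=n_{(-,-,-)}$, $C=n_{(+,-,-)}$, and $D=n_{(-,+,-)}$.

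The cases $B$, $C$, $D$ reduce to direct algebraic factoring. In case $(+,+,+)$ the resulting curve factors as $(y_1+y_3)(y_1+y_3-1-y_1y_3)=0$: the first factor forces $y_2=-1$, incompatible with $\chi(y_2)=1$ since $\chi(-1)=-1$, while the second factor is $(y_1-1)(y_3-1)=0$, giving $B=p^n-2$ after accounting for the overlap at $(1,1,1)$. The analogous factoring in case $(-,+,-)$ forces $\chi(y_2)=-1$ on both branches, so $D=0$. Case $(+,-,-)$ factors as $(y_1-1)\bigl(y_1(1+y_3)+y_3(y_3-1)\bigr)=0$; the trivial branch $y_1=1$ gives $y_2=y_3$ and contributes $(p^n-1)/2$ solutions, while the other branch yields $\chi(y_1)=-\chi(y_3)\chi(y_3^2-1)$ and $\chi(y_2)=\chi(y_3^2-1)$, which is inconsistent with $\chi(y_1)=1$, $\chi(y_2)=-1$. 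Hence $C=(p^n-1)/2$.

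The main difficulty is computing $A=n_{(-,-,-)}$. Fixing $y_3$ as a parameter, the case-$(-,-,-)$ equation becomes a quadratic in $y_1$ whose discriminant has the same quadratic character as $(y_3^2-1)(y_3^2-4y_3-1)$. Using this quadratic relation one verifies $y_1y_2=y_3(1-y_3)/(1+y_3)$ for either root, so $\chi(y_1y_2)=-\chi(y_3)\chi(y_3^2-1)$ is independent of the choice of root; combined with $\chi(y_1y_1')=\chi(y_3)\chi(y_3^2-1)$ from Vieta, this shows that, given $\chi(y_3)=-1$, the sign pattern $(-,-,-)$ is attained by exactly one of the two roots precisely when $\chi(y_3^2-1)=1$ (forcing the two roots to have opposite $\chi$-signs and $\chi(y_1)=\chi(y_2)$) and $\chi(y_3^2-4y_3-1)=1$ (ensuring the discriminant is a nonzero square). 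Therefore
\[
A=\bigl|\{y_3\in\gf_{p^n}:\chi(y_3)=-1,\ \chi(y_3^2-1)=1,\ \chi(y_3^2-4y_3-1)=1\}\bigr|.
\]
I would rewrite this as $\frac{1}{8}\sum_{y_3\in\gf_{p^n}}(1-\chi(y_3))(1+\chi(y_3^2-1))(1+\chi(y_3^2-4y_3-1))$ minus a boundary correction at $y_3=-1$, and expand into eight character sums. Lemma~\ref{2} evaluates the two pure quadratic sums (each giving $-1$); Lemma~\ref{sums}(i)--(ii) annihilate two further sums. The remaining two are evaluated via the substitution $y_3\mapsto -y_3$, which converts $\sum\chi((y_3^2-1)(y_3^2-4y_3-1))$ into $\sum\chi((y_3^2-1)(y_3^2+4y_3-1))=\lambda^{(2)}_{p,n}-1$ by Lemma~\ref{sums}(iii), and $\sum\chi(y_3(y_3^2-4y_3-1))$ into $-\sum\chi(y_3(y_3^2+4y_3-1))=-\lambda^{(1)}_{p,n}$ by definition. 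Collecting everything yields $8A=p^n+\lambda_{p,n}-7$, and substituting back gives $N_4=(p^n-2)+(p^n+\lambda_{p,n}-7)/2+(p^n-1)=(5p^n+\lambda_{p,n}-13)/2$, as required. The main obstacle is the sign-pattern analysis that reduces $A$ to a single-variable indicator and the careful bookkeeping of the $y_3=-1$ boundary term during the character-sum expansion.
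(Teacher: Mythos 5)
Your proposal is correct and takes essentially the same approach as the paper: the same sign-pattern decomposition via Proposition~\ref{property}, elementary counting for three of the patterns, and a reduction of the remaining orbit to a one-variable quadratic-character count evaluated with Lemma~\ref{2} and Lemma~\ref{sums}, yielding the same constituent values $p^n-2$, $\frac{p^n-1}{2}$, $0$ and $\frac{1}{8}(p^n+\lambda_{p,n}-7)$. The only differences are cosmetic: you use the orbit representative $(-1,-1,-1)$ instead of the paper's $(1,1,-1)$ (the substitution $y_3\mapsto -y_3$ bridging your sums to $\lambda^{(1)}_{p,n}$ and $\lambda^{(2)}_{p,n}$), and you factor the eliminated equation explicitly in the easy cases where the paper manipulates sums and products of the variables.
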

\begin{proof}
	 We denote by $n_4$ the number of solutions $(y_1,y_2,y_3)\in(\gf_{p^n}^*)^3$ of Equation \eqref{n_4equationsystem}. With the notation as above, we obtain
	 $n_{(1,1,-1)}=n_{(-1,1,1)}=n_{(-1,-1,-1)}=n_{(1,-1,1)}$ and $n_{(1,-1,-1)}=n_{(-1,-1,1)}$ by Proposition \ref{property}. Note that $y_{i}\ne 0, {y^d_i}=\chi(y_i)y^{-1}_i$, we discuss Equation \eqref{n_4equationsystem} in the following cases.
	 \begin{enumerate}
	
\item Case \rm\uppercase\expandafter{\romannumeral1}.
	 $(\chi(y_{1}), \chi(y_2), \chi(y_3))=(1,1,1)$. Then Equation  \eqref{n_4equationsystem} becomes
	\begin{equation}\label{n(1,1,1)}
		\left\{ \begin{array}{ll}
			{y_1} - {y_2} + {y_3} - 1 &= 0\\
			{y^{-1}_1} - {y^{-1}_2} + {y^{-1}_3}-1 &= 0.
		\end{array} \right.
	\end{equation}
	If $y_{3}=1$, then $y_{2}=y_{1}$, $ (y_{1},y_{1},1) $ is a solution of (\ref{n(1,1,1)}) when $\chi(y_{1})=1$. We obtain $\frac{p^{n}-1}{2}$ solutions. If $y_{3}\ne 1$,  then $ y_{1}\neq y_{2} $, we have 
	\[y_{1}y_{2}=-(y_{1}-y_{2})({y^{-1}_{1}}-{y^{-1}_{2}})^{-1}=-(y_{3}-1)({y^{-1}_3}-1)^{-1}=y_{3}.\]
	Then we obtain $(y_{1}-1)(y_{2}+1)=0$ from the first equation of (\ref{n(1,1,1)}), then $y_{1}=1$ since $\chi(-1)=-1$ and $\chi(y_2)=1$. We conclude that the equation system (\ref{n(1,1,1)}) has solutions with type $(1,y_{3},y_{3})$ when $\chi(y_{3})=1$ and $y_3\neq 1$. By the above discussion, we obtain $n_{(1,1,1)} =2\cdot\frac{p^{n}-1}{2}-1=p^{n}-2$.

	\item Case \rm\uppercase\expandafter{\romannumeral2}.
	 $(\chi(y_{1}), \chi(y_2), \chi(y_3))=(1,1,-1)$. Then Equation \eqref{n_4equationsystem} becomes 
\begin{equation}\label{n(1,1,-1)}
	\left\{ \begin{array}{ll}
		{y_1} - {y_2} + {y_3} - 1 &= 0\\
		{{y^{-1}_1}}- {{y^{-1}_2}} - {{y^{-1}_3}} - 1 &= 0.
	\end{array} \right.
\end{equation}
Note that (\ref{n(1,1,-1)}) has no solution when $y_{3}=\pm 1 $, we obtain $ y_{1}-y_{2}=-y_3+1\ne 0 $, and \[-y_{1}y_{2}=(y_{1}-y_{2})({{y^{-1}_{1}}-{y^{-1}_{2}}})^{-1}=-(y_3-1)(y^{-1}_3+1)^{-1} =-\frac{y_{3}(y_{3}-1)}{y_{3}+1}.\]
We know that $y_1$ and $-y_2$ satisfy the following quadratic equation on the variable $t$.
\begin{equation}\label{n(1,1,-1)simple}
	t^2+(y_{3}-1)t-\frac{y_{3}(y_{3}-1)}{y_{3}+1}=0.
\end{equation}
The {discriminant} of (\ref{n(1,1,-1)simple}) is $\Delta=\frac{(y_{3}-1)(y_{3}^{2}+4y_{3}-1)}{y_{3}+1}$. On one hand, if \eqref{n(1,1,-1)} has a solution $(y_1,y_2,y_3)$ with $(\chi(y_1),\chi(y_2),\chi(y_3))=(1,1,-1)$, then $\chi(\Delta)=1$ ($y_1\neq -y_2$) and $\chi(-y_1y_2)=\chi(-\frac{y_{3}(y_{3}-1)}{y_{3}+1})=-1$. On the other hand, if $\chi(\Delta)=1$ and $\chi(-\frac{y_{3}(y_{3}-1)}{y_{3}+1})=-1$ for some $y_3$ with $\chi(y_3)=-1$, then \eqref{n(1,1,-1)simple} has two distinct solutions and their product is a nonsquare. We assert that one of the two solutions is a square and the other is a nonsquare; the square solution is $y_1$ while the nonsquare solution is $-y_2$. Then we obtain a solution $(y_1,y_2,y_3)$ of \eqref{n(1,1,-1)} with $(\chi(y_1),\chi(y_2),\chi(y_3))=(1,1,-1)$.
We conclude that
\begin{align*}
	n_{(1,1,-1)}&=\#\{y_3\in\gf_{p^n}, y_3\neq 0, \pm1 : \chi(y_3)=-1, \chi(\frac{(y_{3}-1)(y_{3}^{2}+4y_{3}-1)}{y_{3}+1})=1, \chi(-\frac{y_{3}(y_{3}-1)}{y_{3}+1})=-1\}
	\\&=\#\{y_3\in\gf_{p^n}, y_3\neq 0, \pm1 : \chi(y_3)=-1, \chi(y_{3}^{2}+4y_{3}-1)=-1, \chi(y^2_3-1)=-1\}.
\end{align*}
Then
\begin{align*}
	n_{(1,1,-1)}=&\frac{1}{8}\sum_{\substack{y_{3}\neq 0,\pm 1,\\y_{3}^{2}+4y_{3}-1\neq0}}(1-\chi(y_{3}))(1-\chi(y_{3}^{2}-1))(1-\chi(y_{3}^{2}+4y_{3}-1))\notag\\={}&\frac{1}{8}\sum_{y_3\in\gf_{p^n}}(1-\chi(y_{3}))(1-\chi(y_{3}^{2}-1))(1-\chi(y_{3}^{2}+4y_{3}-1))\notag\\{}&-\frac{1}{8}((1-\chi(-1))^2+0+(1-\chi(-1))(1-\chi(-4))+0)\notag\\={}&-1+\frac{1}{8}\sum_{y_3\in\gf_{p^n}}(1-\chi(y_{3}))(1-\chi(y_{3}^{2}-1))(1-\chi(y_{3}^{2}+4y_{3}-1)).
\end{align*}
The above identities hold since when $y^2_3+4y_3-1=0$, then $y_3\neq 0$ and $(1-\chi(y_3))(1-\chi(y^2_3-1))=(1-\chi(y_3))(1-\chi(-4y_3))=1-(\chi(y_3))^2=0$.
By Lemmas \ref{2} and \ref{sums}, we have
\begin{align*}
	n_{(1,1,-1)}=	&-1+\frac{1}{8}\sum_{y_3\in\gf_{p^n}}(1-\chi(y_{3}))(1-\chi(y_{3}^{2}-1))(1-\chi(y_{3}^{2}+4y_{3}-1))\notag\\={}&-1+\frac{1}{8}\sum_{y_{3}\in\gf_{p^n}}[1-\chi(y_3)-\chi(y_{3}^{2}-1)-\chi(y_{3}^{2}+4y_{3}-1)+\chi(y_3(y^2_3-1))\notag\\{}&+\chi(y_3(y^2_3+4y_3-1))+\chi((y^2_3-1)(y^2_3+4y_3-1))-\chi(y_3(y^2_3-1)(y^2_3+4y_3-1))]
	\notag\\={}&\frac{1}{8}(p^n+\lambda_{p,n}-7),
\end{align*}
where $\lambda_{p,n}$ was defined in \eqref{sum}.

\item Case \rm\uppercase\expandafter{\romannumeral3}.
 $(\chi(y_{1}), \chi(y_2), \chi(y_3))=(1,-1,-1)$. Then Equation \eqref{n_4equationsystem} becomes
	\begin{equation}\label{n(1,-1,-1)}
		\left\{ \begin{array}{ll}
			{y_1} - {y_2} + {y_3} - 1 &= 0\\
			{{y^{-1}_1}} + {{y^{-1}_2}} - {{y^{-1}_3}} - 1 &= 0.
		\end{array} \right.
	\end{equation}
It is easy to see that when $y_1=1$, (\ref{n(1,-1,-1)}) has solutions with type $(1,y_2,y_2)$, where $\chi(y_2)=-1$. If $y_1\neq 1$,  we obtain $y_{2}-y_{3}=y_1-1\neq 0$ and then \[y_{2}y_{3}=-(y_{2}-y_{3})(y_{2}^{-1}-y_{3}^{-1})^{-1}=-(y_{1}-1)(-y_{1}^{-1}+1)^{-1}=-y_{1},\]
which is impossible since $\chi(y_1)=1$ and $\chi(y_2)=\chi(y_3)=-1$. We assert that (\ref{n(1,-1,-1)}) has no solution when $y_1\neq 1$. Hence $n_{(1,-1,-1)}=\frac{p^n-1}{2}$.

	\item Case \rm\uppercase\expandafter{\romannumeral4}. $(\chi(y_{1}), \chi(y_2), \chi(y_3))=(-1,1,-1)$. Then Equation \eqref{n_4equationsystem} becomes
	\begin{equation}\label{n(-1,1,-1)}
		\left\{ \begin{array}{ll}
			{y_1} - {y_2} + {y_3} - 1 &= 0\\
			-{y^{-1}_1} - {y^{-1}_2} - {y^{-1}_3} - 1 &= 0.
		\end{array} \right.
	\end{equation}
	 Note that (\ref{n(-1,1,-1)}) has no solution when $y_{2}=-1$. If $y_2\neq -1$, then we have $y_1+y_3=y_2+1\neq 0$ and 
	 \[y_1y_3=(y_1+y_3)(y^{-1}_1+y^{-1}_3)^{-1}=(y_2+1)(-y^{-1}_2-1)^{-1}=-y_2,\]
	 which is impossible since $\chi(y_2)=1$ and $\chi(y_1)=\chi(y_3)=-1$. We assert that (\ref{n(-1,1,-1)}) has no solution. Hence $n_{(-1,1,-1)}=0$.

By discussions as above, we have \[n_{4}=n_{(1,1,1)}+4n_{(1,1,-1)}+2n_{(1,-1,-1)}+n_{(-1,1,-1)}=\frac{1}{2}(5p^{n}+\lambda_{p,n}-13),\] 
which completes the proof.
\end{enumerate}
\end{proof}

\section{The differential spectrum of the power function $x^{\frac{p^n-3}{2}}$ over $\gf_{p^n}$}\label{C}

First of all, we start with a meaningful discussion. For a power function $F(x)=x^d$ over $\gf_{p^n}$, if $\Delta_F=1$, this only occurs when $p$ is odd. The differential spectrum of $F$ is trivial, {which} equals $\{\omega_1=p^n\}$. If $\Delta_F=2$ and $p=2$, the differential spectrum of $F$ is also trivial, {which equals} $\{\omega_0=\omega_2=2^{n-1}\}$. If $\Delta_F=2$, $p$ is odd, and $d$ is odd, note that the solutions of the equation $(x+1)^d-x^d=b$ come in pairs, i.e., $x$ is a solution if and only if $-x-1$ is also a solution. Then only for $x=-\frac{1}{2}$, the corresponding $b$ has odd solution (one solution), i.e., $\omega_1=1$. Then, in this case, the differential spectrum of $F$ is $\{\omega_0=\frac{p^n-1}{2}, \omega_1=1,\omega_2=\frac{p^n-1}{2}\}$, which is also trivial.
In this section, we shall concentrate on studying the differential spectrum of the power function $F(x)=x^{d}$ over $\gf_{p^n}$, where $d=\frac{p^n-3}{2}$, $p^n\equiv3\pmod4$, $p^n>7$ and $p^n\ne27$. From the above discussion, the power functions we considered are APN power functions $x^d$ with even $d$ over $\gf_{p^n}$ ($p$ odd), which are the power functions with the lowest differential uniformity and nontrivial differential spectrum.

Now, note that  \cite{HS}, Helleseth and Sandberg proved that $F$ is APN when $\chi(5)=-1$ and is differentially $3$-uniform when $\chi(5)=1$. Denote by \[\delta(b)=|\{x\in\gf_{p^n}:(x+1)^d-x^d=b\}|\]
for any $b\in\gf_{{p^{n}}}$. It was proved in \cite{HS} that $\delta(b)\leq 2$ for $b\in\gf_{p^n}\setminus\{\pm1\}$. The following lemma on the values of $\delta(1)$ and $\delta(-1)$ was also given in the proof of Theorem 2, \cite{HS}. For the convenience,
we give an alternative proof.

\begin{lemma}\label{delta1}
	With the notation as above, we have $\delta(1)=\delta(-1)=1$ when $\chi(5)=-1$ and $\delta(1)=\delta(-1)=3$ when $\chi(5)=1$.
\end{lemma}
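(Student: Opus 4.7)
The plan is to exploit the identity $x^d = \chi(x)\,x^{-1}$ for $x \in \gf_{p^n}^{*}$, which follows from $d = \tfrac{p^n-3}{2} = \tfrac{p^n-1}{2}-1$; and to use that $d$ is even (since $p^n \equiv 3 \pmod 4$ gives $p^n-3 \equiv 0 \pmod 4$), so $(-1)^d = 1$. First I would handle the boundary inputs directly: $x = 0$ yields $(0+1)^d - 0 = 1$, contributing exactly one solution to $\delta(1)$, while $x = -1$ yields $0 - (-1)^d = -1 \ne 1$, contributing nothing to $\delta(1)$.

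For $x \notin \{0,-1\}$, multiplying $(x+1)^d - x^d = 1$ by $x(x+1)$ gives
\[
\chi(x+1)\,x - \chi(x)\,(x+1) = x(x+1),
\]
and I would split into four cases indexed by $(\chi(x),\chi(x+1)) \in \{\pm 1\}^2$. The pairs $(1,1),(-1,-1),(1,-1),(-1,1)$ reduce respectively to the quadratics $x^2+x+1=0$, $x^2+x-1=0$, $x^2+3x+1=0$, and $x^2-x-1=0$. The case $(1,1)$ is immediately ruled out because $x(x+1) = -1$ would force $\chi(x)\chi(x+1) = \chi(-1) = -1$, contradicting the assumption. The remaining three cases all have discriminant $5$, so they contribute only when $\chi(5) = 1$.

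The next step is to pin down how many of the quadratic roots actually satisfy the character constraints. For $(-1,-1)$, the two roots of $x^2+x-1=0$ have product $-1$, so exactly one has $\chi(x) = -1$; because $x(x+1)=1$, it automatically satisfies $\chi(x+1) = \chi(1/x) = -1$. This case contributes exactly $1$. For $(1,-1)$, the roots of $x^2+3x+1=0$ share the same $\chi(x)$ (product $+1$) while their shifts $x+1$ have opposite $\chi$ (product $-1$); hence this case contributes $1$ if and only if $\chi\bigl(\tfrac{-3+\sqrt 5}{2}\bigr) = 1$. Symmetrically, the case $(-1,1)$ contributes $1$ if and only if $\chi\bigl(\tfrac{3+\sqrt 5}{2}\bigr) = 1$.

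The main obstacle, and the decisive step, is combining those two cases. The key computation is
\[
\chi\!\left(\frac{-3+\sqrt 5}{2}\right) \cdot \chi\!\left(\frac{3+\sqrt 5}{2}\right) = \chi\!\left(\frac{5-9}{4}\right) = \chi(-1) = -1,
\]
so exactly one of these two characters equals $+1$. Thus the two cases $(1,-1)$ and $(-1,1)$ together contribute exactly $1$. Adding the contributions gives $\delta(1) = 1+1+1 = 3$ when $\chi(5)=1$, and $\delta(1) = 1$ when $\chi(5)=-1$. Finally, for $\delta(-1)$ I would invoke the involution $x \mapsto -x-1$: using $(-1)^d = 1$, the substitution $y=-x-1$ turns $(x+1)^d - x^d$ into $-\bigl[(x+1)^d - x^d\bigr]$, which gives a bijection between the solution sets of $\delta(1)$ and $\delta(-1)$, so $\delta(-1) = \delta(1)$.
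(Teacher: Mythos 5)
Your proof is correct, and it reaches the lemma by essentially the paper's framework: the identity $x^d=\chi(x)x^{-1}$ for $x\neq 0$, the split into four cases according to $(\chi(x),\chi(x+1))$ yielding the same four quadratics, the observation that the three surviving quadratics all have discriminant $5$, and the involution $x\mapsto -x-1$ (using that $d$ is even) to get $\delta(-1)=\delta(1)$. The only genuine divergence is how the two mixed-character cases are settled. The paper resolves each one exactly: for $x^2+3x+1=0$ it rewrites $-x=(x+1)^2$, so $\chi(x)=\chi(-1)=-1$ and that case is always empty; for $x^2-x-1=0$ it uses $x+1=x^2$, so the shift is automatically a square and the unique nonsquare root is always a solution when $\chi(5)=1$. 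You instead leave each of these cases conditional on an explicit character value, $\chi\bigl(\tfrac{-3+\sqrt 5}{2}\bigr)$ and $\chi\bigl(\tfrac{3+\sqrt 5}{2}\bigr)$, and then close the argument with the product identity $\tfrac{-3+\sqrt 5}{2}\cdot\tfrac{3+\sqrt 5}{2}=-1$, a nonsquare, so exactly one of the two cases contributes exactly one solution. This parity/pairing trick yields the correct total count without identifying which case occurs (in fact it is always the $x^2-x-1$ case, never the $x^2+3x+1$ case, as the paper's local argument shows). Both routes are valid; the paper's is slightly sharper and pins down the actual solutions, while yours buys a shorter uniform treatment of the two cases at the cost of that precision.
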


\begin{proof}
Obviously, $x_0$ is a solution of 
	\begin{equation}\label{delta1eqn}
		(x+1)^d-x^d=1
	\end{equation}
	if and only if $-x_0-1$ is a solution of $(x+1)^d-x^d=-1$. Then $\delta(-1)=\delta(1)$. We only consider Equation  (\ref{delta1eqn}).  It is easy to see that $x=0$ is a solution of Equation (\ref{delta1eqn})  and $x=-1$ is not a solution of Equation (\ref{delta1eqn}). For $x\neq0,-1$, $x^d=\chi(x)x^{-1}$, we distinguish the following four cases.	
	\begin{enumerate}
\item	 Case 1. $(\chi(x+1),\chi(x))=(1,1)$. Equation (\ref{delta1eqn}) becomes $x(x+1)=-1$, then $\chi(x(x+1))=\chi(-1)=-1$, which is a contradiction. Hence Equation (\ref{delta1eqn})  has no solution in this case.

\item Case 2. $(\chi(x+1),\chi(x))=(1,-1)$. Equation  (\ref{delta1eqn}) becomes $x^2-x-1=0$, the discriminant of this quadratic equation is $5$. If $\chi(5)=-1$, then Equation (\ref{delta1eqn})  has no solution in this case. If $\chi(5)=1$, $x^2-x-1=0$ has two distinct solutions, namely, $x_1$ and $x_2$. Without loss of generality, we assume that $\chi(x_1)=-1$ and $\chi(x_2)=1$ since $\chi(x_1x_2)=\chi(-1)=-1$. Moreover, $x_1$ satisfies $x_1+1=x_1^2$, which means $x_1+1$ is a square element. Then $x_1$ is the unique solution of  Equation (\ref{delta1eqn})  in this case when $\chi(5)=1$.
\item Case 3. $(\chi(x+1),\chi(x))=(-1,1)$. Equation (\ref{delta1eqn}) becomes $x^2+3x+1=0$, then $x=-(x+1)^2$, which contradicts to $\chi(x)=1$. Hence, Equation (\ref{delta1eqn})  has no solution in this case.

\item Case 4. $(\chi(x+1),\chi(x))=(-1,-1)$. Equation (\ref{delta1eqn})  becomes $x^2+x-1=0$, the discriminant of this quadratic equation is also $5$. By a similar proof to Case 2, we can obtain that Equation (\ref{delta1eqn})  has no solution in this case when $\chi(5)=-1$ and has a unique solution in this case when $\chi(5)=1$.
\end{enumerate}

By the discussions above, the desired result follows.
\end{proof}

One can immediately deduce the following corollary.
\begin{corollary}\label{omega3}
	We have, $\omega_3=2$ when $\chi(5)=1$, and $\omega_3=0$ when $\chi(5)=-1$.
\end{corollary}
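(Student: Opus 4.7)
The statement is essentially a direct bookkeeping consequence of what has just been established, so the plan is very short.

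The plan is to read off $\omega_3$ from the known values of $\delta(b)$. First, in the case $\chi(5)=-1$, Helleseth and Sandberg's result (quoted before Lemma~\ref{delta1}) says that $F$ is APN, i.e., $\Delta_F=2$. By definition $\omega_i$ is the number of $b\in\gf_{p^n}$ with $\delta(b)=i$, so no $b$ can satisfy $\delta(b)=3$, and thus $\omega_3=0$.

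In the case $\chi(5)=1$, the paper recalls from \cite{HS} that $\delta(b)\le 2$ for every $b\in\gf_{p^n}\setminus\{\pm1\}$. Hence the only candidates that can contribute to $\omega_3$ are $b=1$ and $b=-1$. Lemma~\ref{delta1} computes precisely these two values and shows $\delta(1)=\delta(-1)=3$. Therefore $\omega_3=|\{\pm 1\}|=2$.

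No obstacle is anticipated; the statement is a one-line consequence of Lemma~\ref{delta1} combined with the bound $\delta(b)\le 2$ for $b\neq\pm 1$. The only thing worth writing down explicitly in the proof is the separation into the two cases $\chi(5)=\pm 1$, to emphasize that in the APN case the conclusion is trivial while in the differentially $3$-uniform case it uses Lemma~\ref{delta1} in an essential way.
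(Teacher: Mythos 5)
Your proposal is correct and matches the paper's intended reasoning: the corollary is stated there as an immediate consequence of Lemma~\ref{delta1} together with the recalled fact from \cite{HS} that $\delta(b)\le 2$ for $b\neq\pm1$, which is exactly the bookkeeping you carry out in the two cases $\chi(5)=\pm1$.
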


To determine the differential spectrum of $F$, it remains to calculate the number of solutions of the equation system given in Lemma \ref{identicalequation}. Recall that $N_{4}$ denotes the number of solutions $(x_{1},x_{2},x_{3},x_{4})\in (\gf_{p^{n}})^{4}$ of the equation system
\begin{equation}\label{equationsystem}
	\left\{ \begin{array}{ll}
		{x_1} - {x_2} + {x_3} - {x_4} &= 0\\
		{x_1^d} - {x_2^d} + {x_3^d} - {x_4^d} &= 0
	\end{array} \right.
\end{equation}
In the following theorem, we determine the value of $N_4$.
\begin{theorem}\label{N_4value}
	Let $p^n\equiv3\pmod4$ and $d=\frac{p^n-3}{2}>0$ is a positive integer, we have \[{{}N_4} = \left\{ \begin{array}{ll}
		\frac{1}{2}(5p^{2n}+(\lambda_{p,n}-10)p^n-(\lambda_{p,n}-7)),&~\mathrm{if}~\chi(5)=-1,\\
		\frac{1}{2}(5p^{2n}+(\lambda_{p,n}+6)p^n-(\lambda_{p,n}+9)),&~\mathrm{if}~\chi(5)=1,
	\end{array} \right.\]  
where $\lambda_{p,n}$ was defined in \eqref{sum}.
\end{theorem}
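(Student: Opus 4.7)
The plan is to split $N_4$ according to the value of $x_4$. Since the first equation is homogeneous of degree $1$ and the second of degree $d$, one writes $N_4 = m_0 + (p^n - 1)m_1$, where $m_0$ denotes the number of solutions with $x_4 = 0$ and $m_1$ denotes the number of triples $(y_1, y_2, y_3) \in \gf_{p^n}^3$ satisfying the system \eqref{n_4equationsystem}, obtained from the original by the scaling $y_i = x_i/x_4$ (valid for every nonzero $x_4$). Theorem \ref{n_4value} already computes the sub-count $n_4$ of solutions of \eqref{n_4equationsystem} lying in $(\gf_{p^n}^*)^3$, so the remaining tasks are to extend $n_4$ to $m_1$ via inclusion-exclusion over the zero-loci $\{y_i = 0\}$, and to compute $m_0$ directly.

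For $m_0$: when $x_3 = 0$ the first equation forces $x_1 = x_2$ and the second becomes automatic, contributing $p^n$ solutions; when $x_3 \neq 0$, the rescaling $z_i = x_i/x_3$ converts the system into $z_2 = z_1 + 1$ together with $(z_1 + 1)^d - z_1^d = 1$, which has $\delta(1)$ solutions in $z_1$. Summing over $x_3 \in \gf_{p^n}^*$ gives $m_0 = p^n + (p^n - 1)\delta(1)$.

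For $m_1$, define $A_i = \{(y_1,y_2,y_3) \in \gf_{p^n}^3 : y_i = 0 \text{ and } \eqref{n_4equationsystem} \text{ holds}\}$ for $i = 1, 2, 3$, so that $m_1 = n_4 + |A_1 \cup A_2 \cup A_3|$. Direct substitution reduces $A_1$ and $A_3$ to the defining equation of $\delta(1)$, yielding $|A_1| = |A_3| = \delta(1)$; using $(-1)^d = 1$ (since $d$ is even when $p^n \equiv 3 \pmod 4$), a short check gives $|A_1 \cap A_2| = |A_2 \cap A_3| = 1$ and $|A_1 \cap A_3| = |A_1 \cap A_2 \cap A_3| = 0$. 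The main obstacle is computing $|A_2|$, which counts $y_1$ satisfying $y_1^d + (1 - y_1)^d = 1$. After the substitution $u = y_1 - 1$ and the use of $x^d = \chi(x)/x$, a case analysis on $(\chi(u+1), \chi(u))$ for $u \notin \{0, -1\}$ produces four quadratic equations: three of discriminant $5$ (namely $u^2 - u - 1 = 0$, $u^2 + u - 1 = 0$, and $u^2 + 3u + 1 = 0$) and one of discriminant $-3$ (namely $u^2 + u + 1 = 0$). The discriminant-$(-3)$ case collapses via $(u+1)^2 = u$ to a character contradiction, and the case $(\chi(u+1), \chi(u)) = (-1, 1)$ is killed by $u(u+1) = 1$. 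The remaining cases $(1,1)$ and $(-1,-1)$ each contribute exactly one solution when $\chi(5) = 1$ and none otherwise; for the latter, $\chi((u_1+1)(u_2+1)) = \chi(-1) = -1$ pins down the unique surviving root. Combined with the two boundary values $u = 0, -1$, this yields the clean identity $|A_2| = \delta(1) + 1$.

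Putting everything together gives $m_1 = n_4 + 3\delta(1) - 1$ and hence
$N_4 = p^n + (p^n - 1)[n_4 + 4\delta(1) - 1]$. Substituting the values $\delta(1) = 1$ or $3$ from Lemma \ref{delta1} according to whether $\chi(5) = -1$ or $+1$, together with the formula $n_4 = \tfrac{1}{2}(5p^n + \lambda_{p,n} - 13)$ from Theorem \ref{n_4value}, routine algebraic simplification produces the two stated expressions for $N_4$. The conceptually delicate step is the identification $|A_2| = \delta(1) + 1$; once this is established the remainder is bookkeeping.
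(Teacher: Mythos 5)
Your proof is correct and takes essentially the same route as the paper: dehomogenize with respect to $x_4$, count the solutions having zero coordinates in terms of $\delta(1)$ from Lemma \ref{delta1}, and add the contribution $n_4(p^n-1)$ from Theorem \ref{n_4value}; your identity $N_4 = p^n + (p^n-1)\left(n_4 + 4\delta(1) - 1\right)$ agrees with the paper's count $N_4 = 1 + 4\delta(1)(p^n-1) + n_4(p^n-1)$ and yields the stated formulas. The only organizational difference is that you establish the $y_2=0$ case by a direct four-case quadratic analysis showing $|A_2|=\delta(1)+1$, whereas the paper disposes of the analogous ``only $x_2=0$'' count with a ``similarly'' (justified by the symmetry $x_2 \leftrightarrow x_4$ of the system), so both arguments are sound.
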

\begin{proof}
	For a solution $ (x_1,x_2,x_3,x_4) \in (\gf_{p^{n}})^{4}$ of (\ref{equationsystem}), first we consider that there exists $ x_i=0 $ for some $ 0 \le i \le 4 $. It is easy to see that $ (0,0,0,0) $ is a solution of (\ref{equationsystem}), and (\ref{equationsystem}) has no solution containing only three zeros. If there are only two zeros in  $ (x_1,x_2,x_3,x_4) $, one can get that $(0,0,x_0,x_0)$, $(x_0,0,0,x_0)$, $(x_0,x_0,0,0)$ and $(0,x_0,x_0,0)$ are solutions of (\ref{equationsystem}), where $x_0\in\gf_{{p^n}}^*$. That means, (\ref{equationsystem}) has $ 4(p^n-1) $ solutions containing only two zeros. We consider that there is only one zero in $ (x_1,x_2,x_3,x_4) $. Without loss of generality, we assume that $x_4=0$, then $x_1,x_2,x_3\neq 0$ and they satisfy 
	\begin{equation}\label{equ3}
		\left\{ \begin{array}{ll}
			{x_1} - {x_2} + {x_3} &= 0\\
			{x^d_1} - {x^d_2} + {x^d_3} &= 0.
		\end{array}\right.
	\end{equation}
	Let $ y_i=\frac{x_i}{x_3} $ for $ i=1,2 $, we have $ y_1-y_2+1=0 $ and $ {y^d_1}-{y^d_2}+1=0 $ with $ y_1,y_2 \ne 0 $. Then $ y_2=y_1+1 $ and $ (y_1+1)^d-y^d_1=1 $. By the Lemma \ref{delta1}, we know that equation $ (y_1+1)^d-y^d_1=1 $ has $ \delta(1)-1 $ nonzero solutions in $ \gf_{{p^n}}^{*} $, and $y_1=-1$ is not a solution. Hence we assert \eqref{equ3} has $(\delta(1)-1)(p^n-1)$ solutions with $x_1,x_2,x_3\neq0$. Similarly, we can determine the number of solutions of (\ref{equationsystem}) with only $x_i=0$, $i=1,2$ and $3$. Then (\ref{equationsystem}) has $ 4(\delta(1)-1)(p^n-1) $ solutions containing only one zero. We conclude that (\ref{equationsystem}) has $ 1+4\delta(1)(p^n-1) $ solutions containing zeros. 

	Next we consider $ x_i \ne 0 $ for $ 1 \le i \le 4 $. Let $ y_i=\frac{x_i}{x_4} $ for $ i=1,2,3 $. We have $y_i\neq 0$ and they satisfy
	\begin{equation}\label{n4value}
		\left\{ \begin{array}{ll}
			{y_1} - {y_2} + {y_3} - 1 &= 0\\
			{y^d_1} - {y^d_2} + {y^d_3} - 1 &= 0.
		\end{array} \right.
	\end{equation}
	By Theorem \ref{n_4value}, we know that the number of solutions of \eqref{n4value} is $\frac{1}{2}(5p^{n}+\lambda_{p,n}-13)$, where $\lambda_{p,n}$ was defined in \eqref{sum}. Then $N_{4}=1+4\delta(1)(p^n-1)+n_4(p^n-1)$ follows by the value of $\delta(1)$ in Lemma \ref{delta1}. We complete the proof.
\end{proof}

We are now in a position to determine the differential spectrum of $F$, which is one of the main results of the article.
 
\begin{theorem}\label{differential spectrum}Let $p^n\equiv3\pmod4$, $p^n>7$ and $p^n\neq27$.
	The power function $F(x)=x^{\frac{p^{n}-3}{2}}$ over  $\gf_{p^{n}}$ is APN with the differential spectrum
	\begin{align*}
		DS_{F}=\{&\omega_{0}=\frac{1}{4}(p^n+\lambda_{p,n}-7),\\&\omega_{1}=\frac{1}{2}(p^n-\lambda_{p,n}+7),\\&
		\omega_{2}=\frac{1}{4}(p^n+\lambda_{p,n}-7)\}
	\end{align*}
when $\chi(5)=-1$, and is differentially 3-uniform {with the differential spectrum}
\begin{align*}
	DS_{F}=\{&\omega_{0}=\frac{1}{4}(p^n+\lambda_{p,n}+1),\\&\omega_{1}=\frac{1}{2}(p^n-\lambda_{p,n}+3),\\&
	\omega_{2}=\frac{1}{4}(p^n+\lambda_{p,n}-15),\\&
	\omega_{3}=2\}
\end{align*}
when $\chi(5)=1$, where $\lambda_{p,n}$ was defined  in \eqref{sum}.
\end{theorem}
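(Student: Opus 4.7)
The plan is to extract the differential spectrum from three linear constraints on the $\omega_i$'s: the two moment identities \eqref{omegaiomega} and the second-moment identity \eqref{i^2omega} from Lemma \ref{identicalequation}, combined with the Helleseth--Sandberg value of $\Delta_F$ and the explicit value of $\omega_3$ from Corollary \ref{omega3}. Since Theorem \ref{N_4value} already supplies the exact value of $N_4$, the only remaining work is linear algebra, and the two cases $\chi(5)=\pm 1$ are handled in parallel.

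First I would record what is already known. By Helleseth--Sandberg, $F$ is APN when $\chi(5)=-1$, so $\omega_i=0$ for $i\ge 3$; and $F$ is differentially $3$-uniform when $\chi(5)=1$, so $\omega_i=0$ for $i\ge 4$ while $\omega_3=2$ by Corollary \ref{omega3}. Plugging $N_4$ from Theorem \ref{N_4value} into \eqref{i^2omega}, I would simplify $(N_4-p^{2n})/(p^n-1)$. The key algebraic observation is that the numerator factors cleanly: one rewrites
\begin{equation*}
3p^{2n}+(\lambda_{p,n}-10)p^n-(\lambda_{p,n}-7)=(p^n-1)(3p^n+\lambda_{p,n}-7)
\end{equation*}
in the case $\chi(5)=-1$, and similarly
\begin{equation*}
3p^{2n}+(\lambda_{p,n}+6)p^n-(\lambda_{p,n}+9)=(p^n-1)(3p^n+\lambda_{p,n}+9)
\end{equation*}
in the case $\chi(5)=1$, so that $\sum_i i^2\omega_i=\tfrac12(3p^n+\lambda_{p,n}-7)$ or $\tfrac12(3p^n+\lambda_{p,n}+9)$ respectively.

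Next I would solve the linear systems. In the APN case, the three unknowns $\omega_0,\omega_1,\omega_2$ satisfy
\begin{equation*}
\omega_0+\omega_1+\omega_2=p^n,\quad \omega_1+2\omega_2=p^n,\quad \omega_1+4\omega_2=\tfrac12(3p^n+\lambda_{p,n}-7),
\end{equation*}
from which $\omega_2$, then $\omega_1$, then $\omega_0$ are read off to give the claimed expressions. In the $3$-uniform case, I substitute $\omega_3=2$ and solve
\begin{equation*}
\omega_0+\omega_1+\omega_2=p^n-2,\quad \omega_1+2\omega_2=p^n-6,\quad \omega_1+4\omega_2=\tfrac12(3p^n+\lambda_{p,n}-27),
\end{equation*}
which again yields the stated values of $\omega_0,\omega_1,\omega_2$.

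Conceptually, there is no genuine obstacle at this stage: the hard work lies upstream in Section \ref{A} (the character-sum evaluations feeding into $\lambda_{p,n}$), in Theorem \ref{n_4value} (the delicate case analysis of the quadratic-character signs of $y_1,y_2,y_3$), and in Theorem \ref{N_4value} (the reduction of $N_4$ to $n_4$ via a $\delta(1)$-count). The main thing to be careful about here is bookkeeping: ensuring that the $1+4\delta(1)(p^n-1)$ contribution from solutions containing zeros is correctly combined with the nonzero contribution $n_4(p^n-1)$, and that the correct value of $\delta(1)$ (namely $1$ or $3$ according as $\chi(5)=-1$ or $\chi(5)=1$, from Lemma \ref{delta1}) is substituted in each branch. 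Once the factorization of the numerator of $(N_4-p^{2n})/(p^n-1)$ is performed, the differential spectrum emerges immediately from the $3\times 3$ system.
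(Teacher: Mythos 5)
Your proposal is correct and follows essentially the same route as the paper: it combines the Helleseth--Sandberg uniformity result and Corollary \ref{omega3} with the identities \eqref{omegaiomega} and \eqref{i^2omega}, substitutes $N_4$ from Theorem \ref{N_4value}, and solves exactly the same $3\times3$ linear systems (your factorizations of $(N_4-p^{2n})/(p^n-1)$ reproduce the paper's intermediate values $\tfrac12(3p^n+\lambda_{p,n}-7)$ and $\tfrac12(3p^n+\lambda_{p,n}+9)$). No gaps; the resulting values of $\omega_0,\omega_1,\omega_2$ agree with the stated spectrum in both cases.
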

\begin{proof}
	By \eqref{i^2omega} and Theorem \ref{N_4value},  the elements in the differential spectrum of $F$ satisfy
	\begin{equation}\label{i2omegavalue}
		\sum_{i=0}^{\Delta_{F}}i^{2}\omega_{i}=\left\{\begin{array}{ll}
			\frac{1}{2}(3p^n+\lambda_{p,n}-7), & \mathrm{if}~ \chi(5)=-1 \\ 
			\frac{1}{2}(3p^n+\lambda_{p,n}+9), &\mathrm{if}~ \chi(5)=1 .
		\end{array}\right.
	\end{equation}
By Corollary \ref{omega3}, we have $\omega_{3}=2$ when $\chi(5)=1$, and $\omega_{3}=0$ when $\chi(5)=-1$. By \eqref{omegaiomega} and \eqref{i2omegavalue}, $\omega_{0}, \omega_{1}$ and $\omega_{2}$ satisfy 
\begin{equation*}
    \left\{\begin{array}{ll}
    	\omega_{0}+\omega_{1}+\omega_{2}&=p^n\\ 
    	\omega_{1}+2\omega_{2}&=p^{n}\\
    	\omega_{1}+4\omega_{2}&=\frac{1}{2}(3p^n+\lambda_{p,n}-7)
    \end{array}\right.
\end{equation*}
when $\chi(5)=-1$, and they satisfy
\begin{equation*}
	\left\{\begin{array}{ll}
		\omega_{0}+\omega_{1}+\omega_{2}&=p^n-2\\ 
		\omega_{1}+2\omega_{2}&=p^{n}-6\\
		\omega_{1}+4\omega_{2}&=\frac{1}{2}(3p^n+\lambda_{p,n}-27)
	\end{array}\right.
\end{equation*}
when $\chi(5)=1$. By solving the above   system of equation, the differential spectrum of $F$ follows. 
\end{proof}

Below, we explicit the differential spectrum of $F(x)$ over $\gf_{p^{n}}$ for some specific values of $p$ and $n$.
\begin{example}
\begin{itemize}
	\item For $p=3,\ n=5$, the power function $F(x)=x^{120}$ over $\gf_{3^{5}}$  is APN  whose differential spectrum is
	\begin{equation*}
		DS_{F}=\{\omega_{0}=60,\omega_{1}=123,\omega_{2}=60\}.
	\end{equation*}

\item For
	 $p=7,\ n=3$, the power function $F(x)=x^{170}$ over $\gf_{7^{3}}$ is APN  whose differential spectrum is
	\begin{equation*}
		DS_{F}=\{\omega_{0}=84,\omega_{1}=175,\omega_{2}=84\}.
	\end{equation*}
\item For $p=11,\ n=3$, the power function $F(x)=x^{664}$ over $\gf_{11^{3}}$ is $3$-differentially uniform   whose differential spectrum is
	\begin{equation*}
		DS_{F}=\{\omega_{0}=316,\omega_{1}=701,\omega_{2}=312,\omega_{3}=2\}.
	\end{equation*}
	\end{itemize}
\end{example}

\section{On the $c$-differential properties of the power function $x^{\frac{p^n-3}{2}}$ over $\gf_{p^n}$}\label{D}

In \cite{BCJW}, Borisov \textit{et al.} used a new type of differential, namely multiplicative differential, that is quite useful from a practical perspective for ciphers that utilize modular multiplication as a primitive operation. Despite vast recent literature since its introduction, we emphasize that compelling cryptographic motivation directly related to attacks was not performed yet except for some specific ciphers. It seems indeed helpful to evaluate the resistance of S-boxes (in even characteristic) against cryptanalyses of some existing ciphers, such as a variant of the well-known IDEA cipher. In addition, recall that Borisov \textit{et al.} argued that one should look at other types of differential for a cryptographic function $ F $, not only the usual $ (F(x+a),F(x)) $ but $ (F(x+a),cF(x)) $. Moreover, they introduced the $c$-Differential Distribution Table ($c$DDT for short). For a function $F$ from $\gf_{{q}}$ to itself and $c\in\gf_{{q}}$, the entry at $(a,b)$ of the $c$DDT, denoted by $_c\delta_F(a,b)$, is defined as

\[_c\delta_F(a,b)=|\{x\in\gf_q: ~F(x+a)-cF(x)=b\}|.\]
The corresponding $c$-differential uniformity is defined as follows.
\begin{definition}(\cite{EFRST})
	Let $ \gf_{q} $ denote the finite field with $ q $ elements, where $ q $ is a prime power. For a function $ F:{\gf_{q}} \to {\gf_{q}} $, and $a, b,  c\in \gf_{q} $, we call 
	\[_c{\Delta _F} = \max \left\{ {_c{\delta _F}\left(a, b \right):~a,b \in {\gf_{{q}}},\mathrm{and}~a \ne 0~\mathrm{if}~c = 1} \right\}\]
	the $ c $-differential uniformity of $ F $.
\end{definition} If $ _c{\Delta _F}=\delta $, then we say that $ F $ is differentially $ (c,\delta) $-uniform. Similarly, the smaller the value $_c\Delta_F$ is, the better $F$ resists multiplicative differential attacks. If the $ c $-differential uniformity of $ F $ equals $ 1 $, then $ F $ is called a perfect $ c $-nonlinear (P$ c $N) function. P$c$N functions over odd characteristic finite fields are called $ c $-planar functions. If the $ c $-differential uniformity of $ F $ is $ 2 $, then $ F $ is called an almost perfect $ c $-nonlinear (AP$ c $N) function. It is easy to conclude that for $ c=1 $ and $ a\ne 0 $, the $ c $-differential uniformity becomes the usual differential uniformity, and the P$ c $N, and AP$ c $N functions become PN and APN functions, respectively. For cryptographic functions with low $c$-differential uniformity, the readers can refer to \cite{BC,BT,EFRST,HPRS,MRSYZ,TZJT,WZ,WLZ,YAN,ZH}. The reader could remember that the notion of $c$-differential uniformity is theoretically interesting since it is an extended version of the well-known differential uniformity. A recent survey on $c$-differential uniformity and related notions can be found in \cite{MMM-2022}. 

Similarly, when $F(x)=x^d $ is a power function, one easily observe  that $_c{\delta_F(a,b)}$=$_c{\delta_F(1,{b/{a^d}})}$ for all $a\in \gf_{{q}}^*$ and $b\in \gf_{q}$. More precisely, it was proved in \cite{MRSYZ} that the $c$-differential uniformity of the power function $F(x)=x^d$ is given by
\begin{equation}\label{cdf}
_c{\Delta _F}=\max \big\{ \{{}_c\delta_F(1,b) : b \in \gf_{{q}} \} \cup \{\gcd(d,q-1)\} \big\}.
\end{equation}
Moreover, as a generalization of the differential spectrum, the $ c $-differential spectrum of a power function is defined as follows.

\begin{definition}(\cite{WZ}) Let $ F\left( x \right) = {x^d} $ be a power function over $\gf_{q} $ with $c$-differential uniformity $_c\Delta_F$. Denote by $ {}_c{\omega _i} = |\{ {b \in \gf_{q} : {{}_c{\delta _F}( 1,b) = i} } \}|$ for each $ 0 \le i \le {}_c{\Delta _F} $. The $ c $-differential spectrum of $ F $ is defined to be the multi-set \[\mathbb{S} = \{ {{}_c{\omega _i}:{0 \le i \le {}_c{\Delta _F}} ~\mathrm{and}{}~_c{\omega _i} > 0} \}.\]
\end{definition}

The $c$-differential spectrum of $F$ has the following properties.

\begin{theorem}\label{c-identicalequation}
	(\cite{YZ}, Theorem 4)
	Let $ F\left( x \right) = {x^d} $ be a power function over $ \gf_{p^{n}} $ with $c$-differential uniformity $_c{\Delta _F}$ for some $1\neq c\in\gf_{p^{n}}$. Recall that $ {{}_c\omega _i} = |\left\{ {b \in \gf_{p^{n}} : {{}_c{{ \delta}_F}\left(1, b \right) = i}} \right\}| $ for each $ 0 \le i \le {}_c{\Delta _F} $, where $_c\delta_F(1,b)=|\{x\in\gf_{p^{n}}: (x+1)^d-cx^d=b\}|$. We have 
	\begin{equation}\label{c-omegaiomega}
		\sum_{i=0}^{{}_c{\Delta _F}}{}_c\omega_i=\sum_{i=0}^{{}_c{\Delta _F}}i\cdot{}_c\omega_i=p^{n}.
	\end{equation} 	
	Moreover, we have 
	\begin{equation}\label{c-i^2omega}
		\sum\limits_{i = 0}^{{}_c{\Delta _F}} {{i^2}\cdot {}_c{\omega _i}}  = \frac{{{\rm{ }}{{}_cN_4} - 1}}{{p^{n}} - 1}-\gcd(d,p^{n}-1) ,
	\end{equation}
	where
	\begin{equation*}
		{}_cN_{4}=\Bigg|\left\{ {\left( {{x_1},{x_2},{x_3},{x_4}} \right) \in (\gf_{p^{n}})^4 : {\Bigg\{ \begin{array}{ll}
					{x_1} - {x_2} + {x_3} - {x_4} &= 0\\
					x_1^d - cx_2^d + cx_3^d - x_4^d &= 0
		\end{array} } } \right\}\Bigg|. 
	\end{equation*} 
\end{theorem}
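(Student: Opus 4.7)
The plan is to mirror the classical derivation of \eqref{i^2omega} in Lemma~\ref{identicalequation}, while tracking the correction caused by the modified second equation $x_1^d - c x_2^d + c x_3^d - x_4^d = 0$, in particular by the degenerate slice $a = 0$ of the summation, which is where the homogeneity of ${}_c\delta_F$ is lost.

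First I would dispatch the two linear identities in \eqref{c-omegaiomega} together: by definition $\{{}_c\omega_i\}$ partitions $\gf_{p^n}$ according to the value of ${}_c\delta_F(1,b)$, which gives $\sum_i {}_c\omega_i = p^n$ directly; and
\[
\sum_i i \cdot {}_c\omega_i \;=\; \sum_{b \in \gf_{p^n}} {}_c\delta_F(1,b) \;=\; \bigl|\{(x,b)\in (\gf_{p^n})^2 : (x+1)^d - c x^d = b\}\bigr| \;=\; p^n,
\]
since each $x$ yields exactly one $b$.

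For the quadratic identity \eqref{c-i^2omega}, set $f_a(t) := (t+a)^d - c t^d$ and rewrite
\[
M \,:=\, \sum_i i^2 \cdot {}_c\omega_i \,=\, \sum_{b \in \gf_{p^n}} {}_c\delta_F(1,b)^2 \,=\, \bigl|\{(x,y)\in (\gf_{p^n})^2 : f_1(x) = f_1(y)\}\bigr|.
\]
The core step is then a re-parameterization of ${}_cN_4$: the first equation of the system forces $a := x_1 - x_2 = x_4 - x_3$, and the second becomes $f_a(x_2) = f_a(x_3)$, so ${}_cN_4 = \sum_{a \in \gf_{p^n}} N(a)$ with $N(a) := |\{(x_2,x_3) : f_a(x_2) = f_a(x_3)\}|$. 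For $a \neq 0$, the change of variable $t \mapsto a t$, together with the power-function scaling ${}_c\delta_F(a,b) = {}_c\delta_F(1, b/a^d)$ noted in \eqref{cdf}, yields $N(a) = M$, so these $a$ contribute $(p^n - 1) M$ in total.

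The main obstacle is the slice $a = 0$, which is precisely where the $c$-differential scaling identity breaks down because the equation is no longer of the form $(x+a)^d - c x^d = b$ with $a \neq 0$. Here $f_0(t) = (1-c) t^d$ is a nonzero scalar multiple of the power map (this is where the hypothesis $c \neq 1$ is used), so $f_0(x_2) = f_0(x_3)$ is equivalent to $x_2^d = x_3^d$; counting pairs in this fiber gives $N(0) = 1 + (p^n - 1)\gcd(d, p^n - 1)$, the extra $\gcd$-term being exactly the correction that distinguishes \eqref{c-i^2omega} from \eqref{i^2omega}. Assembling,
\[
{}_cN_4 \;=\; (p^n - 1) M \;+\; 1 \;+\; (p^n - 1)\gcd(d, p^n - 1),
\]
and isolating $M$ produces exactly \eqref{c-i^2omega}.
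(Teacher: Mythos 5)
Your argument is correct: the two linear identities are immediate, and your decomposition of ${}_cN_4$ along $a=x_1-x_2=x_4-x_3$, with the scaling $f_a(at)=a^df_1(t)$ giving $N(a)=\sum_b {}_c\delta_F(1,b)^2$ for each $a\neq 0$ and the degenerate slice $N(0)=1+(p^n-1)\gcd(d,p^n-1)$ (using $c\neq 1$), yields exactly \eqref{c-i^2omega}. Note that the present paper does not prove this statement at all — it is quoted from \cite{YZ}, Theorem 4 — and your derivation is the standard one, i.e.\ the natural adaptation of the argument behind Lemma~\ref{identicalequation}, so it serves as a valid self-contained proof.
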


It is generally difficult to determine the $c$-differential spectrum of power functions. We know that only a few classes of power functions over odd characteristic finite fields have a nontrivial $c$-differential spectrum. We summarize them in Table \ref{t2}. 

\begin{table}[t]
	\renewcommand{\arraystretch}{1.7}
	\caption{Power functions $F(x)=x^{d}$ over $\gf_{p^{n}}$ with known $c$-differential spectrum}
	\label{t2}
	\centering
	\begin{tabular}{|c|c|c|c|c|}
		\hline
		$p$&$d$ & Conditions & $_{c}\Delta_{F}$ & Ref. \\ 
		\hline\hline
		2&$2^{3m}+2^{2m}+2^{m}-1$&$n=4m$, $0,1\ne c\in \gf_{2^{n}}$, $c^{1+2^{2m}}=1$& $2$&\cite{TZJT}\\\hline
		odd& $p^{k}+1$&$1\ne c\in \gf_{p^{\gcd(n,k)}}$ and $ \frac{n}{\gcd(n,k) }$ is odd, or $ c\notin \gf_{p^{\gcd(n,k)}} $, $ n $ is even and $ k=\frac{n}{2} $ & $2$ &\cite{WZ}\\\hline
		2& $2^{n}-2$& $c\ne 0$, $\tr_{n}(c)=\tr_{n}(c^{-1})=1$ &$2$&\cite{YZ}, Thm 7\\\hline
		2&$2^n-2$ &$c\ne 0$, $\tr_{n}(c)=0$ or $\tr_{n}(c^{-1})=0$ & $3$ &\cite{YZ}, Thm 7\\\hline
		odd&$p^{n}-2$ &any $c$ & 2 or 3 &\cite{YZ}, Thm 8\\\hline
		3&$3^n-3$ & $c=-1$, any $n$& 4 or 6  &\cite{YZ}, Thm 10\\\hline
		odd&$\frac{p^{k}+1}{2}$ & $c=-1$, $ \gcd(n,k)=1$, $ \frac{2n}{\gcd(2n,k) }$ is even & $\frac{p+1}{2}$ & \cite{YZ}, Thms 11-14\\\hline
		5&$\frac{5^{n}-3}{2}$ & $c=-1$, $n\geq 2$& $2$& \cite{YZ}, Thm 18\\\hline
		$p$ & $\frac{p^n-3}2$ & $c=-1$,\ $p^n\equiv3\pmod4$ and $p^{n}>3$ & 2 or 4 & This paper \\\hline
	\end{tabular}
\end{table}

Let $F(x)=x^d$ be a power function over $\gf_{p^{n}}$, where $p$ is an odd prime and $d=\frac{p^{n}-3}{2}$ is a positive integer. The $(-1)$-differential uniformity of $F$ was discussed in \cite{MRSYZ}. It is proved that $_{-1}\Delta_{F}\leq4$. In the rest of this section, we study the $c$-differential uniformity and the $(-1)$-differential spectrum of $F$.

\subsection{The $c$-differential uniformity of $x^{\frac{p^n-3}{2}}$ over $\gf_{p^n}$}\label{a}

We give the following result on the $c$-differential uniformity of $F$ for a general $c\in\gf_{p^n}\setminus\{\pm 1\}$.
	\begin{theorem}\label{c-differential uniform} Let $F(x)=x^d$ be a power function over $\gf_{p^{n}}$, where $p$ is an odd prime and $d=\frac{p^{n}-3}{2}$ is a positive integer. For $\pm 1\neq c \in \gf_{p^{n}}$, we have $_c\Delta_{F}\leq 9$. Moreover, if $p^{n}\equiv 3\pmod4$, then $_c\Delta_{F}\leq 5$.
\end{theorem}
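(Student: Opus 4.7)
The plan is to analyze the equation $(x+1)^d - c x^d = b$ directly by exploiting the identity $x^d = \chi(x)/x$ valid for every $x\in\gf_{p^n}^*$, which follows from $d = (p^n-3)/2 = (p^n-1)/2 - 1$. For $x\notin\{0,-1\}$, multiplying through by $x(x+1)$ transforms the equation into $\chi(x+1)\,x - c\chi(x)(x+1) = bx(x+1)$. Setting $u=\chi(x+1)$ and $v=\chi(x)$ and letting $(u,v)$ range over $\{\pm 1\}^2$ produces four quadratic equations
\[
Q_{u,v}(x)\;:=\;bx^2 + (b+cv-u)x + cv = 0,
\]
each with at most two roots in $\gf_{p^n}$; a root of $Q_{u,v}$ is a \emph{valid} solution of the original equation only if its actual character pair $(\chi(x+1),\chi(x))$ matches $(u,v)$.

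For the general bound $_c\Delta_F\le 9$, I would first handle the boundary: $x=0$ is a solution iff $b=1$, and $x=-1$ is a solution iff $b=c(-1)^{d+1}$. Since $c\ne\pm 1$, these two conditions are mutually exclusive, contributing at most one extra solution. Combined with at most $2$ solutions from each of the four quadratics, this gives $_c\delta_F(1,b)\le 4\cdot 2+1=9$ when $b\ne 0$. For $b=0$, the equation $(x+1)^d=cx^d$ reduces to $y^d=c$ with $y=(x+1)/x$, yielding at most $\gcd(d,p^n-1)=2$ solutions. In view of \eqref{cdf} this establishes $_c\Delta_F\le 9$.

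For the sharper bound under $p^n\equiv 3\pmod 4$, the central observation is that at most one of the four quadratics can contribute \emph{both} of its roots as valid solutions. By Vieta's formulas, $x_1 x_2 = cv/b$ and $(x_1+1)(x_2+1) = u/b$; validity of both roots forces $\chi(cv/b)=1$ and $\chi(u/b)=1$. Using $\chi(-1)=-1$ (the hypothesis $p^n\equiv 3\pmod 4$ enters here, yielding $\chi(v)=v$ for $v=\pm 1$), these conditions pin $(u,v)=(\chi(b),\chi(bc))$, which I call the \emph{special} case. The other three cases therefore contribute at most one valid root apiece, so the valid solutions with $x\notin\{0,-1\}$ total at most $2+3=5$. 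This already settles $_c\delta_F(1,b)\le 5$ for every $b\notin\{1,-c\}$, because there is no boundary contribution in that range.

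The main obstacle is the residual values $b=1$ and $b=-c$, where a boundary solution would naively push the count to $6$. My approach is to show that in both of these cases the special quadratic has \emph{zero} valid roots, forcing the total to at most $3+1=4$. For $b=1$ the special quadratic becomes $x^2+mx+m=0$ with $m=c\chi(c)$; its roots satisfy $(x_1+1)(x_2+1)=1$, giving $x_2=-x_1/(x_1+1)$, and combining $\chi(x_1)\chi(x_2)=\chi(m)=1$ with $\chi(-1)=-1$ forces $\chi(x_1+1)=-1\ne 1=\chi(b)$, excluding both roots. A symmetric computation for $b=-c$, where after scaling the special quadratic has $x_1 x_2=1$ (so the roots are reciprocal), forces $\chi(x_1)=1\ne -1=\chi(bc)$ and again excludes both roots. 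These two sign computations are where the hypothesis $p^n\equiv 3\pmod 4$ is used essentially, and assembling all the cases with \eqref{cdf} then yields $_c\Delta_F\le 5$.
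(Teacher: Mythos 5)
Your proof is correct and follows essentially the same route as the paper's: the identical reduction of $(x+1)^d-cx^d=b$ to the four quadratics $bx^2+(b+cv-u)x+cv=0$ with Vieta data $x_1x_2=cv/b$ and $(x_1+1)(x_2+1)=u/b$, the same bookkeeping for $x=0,-1$ and $b=0$, and the same use of $\chi(-1)=-1$ to show that only one character case can contribute two valid roots. Your only departure is organizational: you pin down the unique ``special'' pair $(u,v)=(\chi(b),\chi(bc))$ in advance and show its quadratic has no valid roots when $b\in\{1,-c\}$, which unifies (and slightly sharpens, to at most $4$ for those two values of $b$) the paper's case split on $\chi(c)$.
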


\begin{proof} For a fixed $c\neq \pm1$, to determine the $c$-differential uniformity of $F$, we consider the number of the solutions of
	
	\begin{equation}
		\label{cequ1}
		 _{c}\Delta(x):=(x+1)^{d}-cx^{d}=b
	\end{equation}
	for any $b\in \gf_{p^{n}}$. Denote by $ {\delta_{c}}(b) = |\left\{ {x \in {\gf_{{p^{n}}}}:~_{c}\Delta(x)= b} \right\}|$. In what follows, we will show that $\delta_c(b)\leq 9$ for any $b\in\gf_{{p^{n}}}$. For $b=0$, (\ref{cequ1}) becomes $(1+\frac{1}{x})^d=c$ since $x\neq 0$. Then $\delta_c(0)\leq \mathrm{gcd}(d,p^n-1)=2$.
 Furthermore, for any $b\in\gf_{p^n}^*$, it is obvious that $_c\Delta(0)=1$ and $_c\Delta(-1)=(-1)^{d+1}c$. Then $x=0$ and $x=-1$ cannot be solutions of  \eqref{cequ1} simultaneously since $c\neq \pm 1$. Now we assume that  $x\neq 0, -1$, then \eqref{cequ1} becomes
	\begin{equation}\label{cequ2}
		\chi(x+1)(x+1)^{-1}-c\chi(x)x^{-1}=b,
	\end{equation}
	where $\chi$ denotes the quadratic multiplicative character on $\gf_{p^n}$. Depending on the values of $ \chi(x)$ and  $\chi(x+1)$, we obtain four disjoint cases. In each case, (\ref{cequ2}) can be rewritten as a quadratic equation. We summarize them in Table \ref{t3}.
	
	\begin{table}[t]
		\renewcommand{\arraystretch}{1.7}
		\caption{Simplification of \eqref{cequ2} in four disjoint cases}
		\label{t3}
		\centering
		\begin{tabular}{|c|c|c|c|c|}
			\hline
			Case &   $(\chi(x+1),\chi(x))$ & Corresponding quadratic equations & $ x_{1}x_{2} $ & $ (x_{1}+1)(x_{2}+1) $  \\
	     	\hline\hline
			$ \rm\uppercase\expandafter{\romannumeral1} $ &  $(1,1)$ & $ bx^{2}+(b+c-1)x+c=0 $ & $ \frac{c}{b} $ & $ \frac{1}{b} $  \\\hline
			$ \rm\uppercase\expandafter{\romannumeral2} $ &  $(1,-1)$ & $ bx^{2}+(b-c-1)x-c=0 $ & $ -\frac{c}{b} $ & $ \frac{1}{b} $  \\\hline
			$ \rm\uppercase\expandafter{\romannumeral3} $ &  $(-1,1)$  & $  bx^{2}+(b+c+1)x+c=0 $ & $ \frac{c}{b} $ & $ -\frac{1}{b} $  \\\hline
			$ \rm\uppercase\expandafter{\romannumeral4} $ & $(-1,-1)$ & $ bx^{2}+(b-c+1)x-c=0 $ & $ -\frac{c}{b} $ & $ -\frac{1}{b} $  \\\hline
		\end{tabular}
	\end{table}

	As we see from Table \ref{t3}, \eqref{cequ2} has at most two solutions in each case. Then $\delta_{c}(b) \leq 9$ for any $b\in\gf_{p^n}^*$. By \eqref{cdf} and $\gcd(d, p^{n}-1)\leq 2$, we obtain $_c\Delta_F\leq 9$.
	
	Now we assume $p^{n}\equiv 3\pmod4$, i.e., $\chi(-1)=-1$. It was proved that $\delta_c(0)\leq2$. In the following, we assume that $b\neq 0$. It is mentioned that $_c\Delta(0)=1$ and $_c\Delta(-1)=-c$. For fixed $b\neq 1, -c$, we will show that $\delta_c(b)\leq 5$. If \eqref{cequ2} has at most one solution in each case, then $\delta_c(b)\leq 5$ is obvious. We consider that \eqref{cequ2} has two solutions in one of the four cases. Without loss of generality, if \eqref{cequ2} has two solutions in Case I, namely $x_1$ and $x_2$, then $\chi(x_1x_2)=\chi(\frac{c}{b})=1$ and $\chi((x_1+1)(x_2+1))=\chi(\frac{1}{b})=1$. Hence $\chi(-\frac{c}{b})=-1$, which implies that \eqref{cequ2} has at most one solution in Case II. Otherwise, if $x_3$ and $x_4$ are distinct solutions of \eqref{cequ2} in Case II, then $\chi(x_3)=\chi(x_4)=-1$ and $\chi(x_3x_4)=\chi(-\frac{c}{b})=-1$, which is a contradiction. Similarly, \eqref{cequ2} has at most one solution in Case III (respectively, Case IV) since $\chi(-\frac{1}{b})=-1$. Then $\delta_c(b)\leq 5$ when $b\neq 0, 1, -c$.
	
	Next we consider $b=1$ and $b=-c$. For $b=1$, if $\chi(c)=1$, then we can obtain that \eqref{cequ2} has at most one solution in Case II (respectively, Case IV) since $\chi(-\frac{c}{b})=-1$. Moreover, in Case I, the quadratic equation can be rewritten as $x^{2}=-c(x+1)$, which has no solutions in Case I. Then $\delta_{c}(1) \leq 5$. If $\chi(c)=-1$, we can prove that \eqref{cequ2} has at most one solution in Case I (respectively Case III) and has no solution in Case II. We conclude that Then $\delta_{c}(1) \leq 5$. The proof of $\delta_c(-c)\leq 5$ is similar and omitted. By the above discussions, when $p^n\equiv 3\pmod4$, we obtain that $\delta_c(b)\leq 5$ for any $b\in\gf_{p^n}$, i.e., $_{c}\Delta_F\leq 5$, which completes the proof.
\end{proof}
\begin{remark}

 Let $\alpha$ be a primitive element of $\gf_{{p^{n}}}$. By the numerical results from computer experiments, when $p=7$, $n=4$ and $c=\alpha^{82}$, the $c$-differential uniformity of $F$ is 9. When $p=7,\ n=3$ and $c=\alpha^{327}$, the $c$-differential uniformity of $F$ is 5. Therefore, the bound shown in Theorem \ref{c-differential uniform} is tight.
\end{remark}

\subsection{The $(-1)$-differential spectrum of $x^{\frac{p^n-3}{2}}$}\label{b}
In this subsection, we focus on the $c=-1$. In \cite{MRSYZ}, Mesnager, Riera, St{\u{a}}nic{\u{a}}, Yan, and Zhou proved that the $(-1)$-differential uniformity of $F$ is less than or equal to $4$. In this subsection, we determine the $(-1)$-differential spectrum of $F$ when $p^n\equiv 3 (\mathrm{mod}~ 4)$. For any $b\in\gf_{p^n}$, we consider the $(-1)$-differential equation
\begin{equation}\label{-1diff}
	_{-1}\Delta(x)=(x+1)^d+x^d=b.
\end{equation}
It is not difficult  to see that $x$ is a solution of (\ref{-1diff}) if and only if $-x-1$ is a solution of (\ref{-1diff}). We assert that $\delta_{-1}(b)$ is an even number except
	\begin{equation*}
	b=_{-1}\Delta(-\frac{1}{2})=\left\{\begin{array}{ll}
		4, &~~\mathrm{if}~ \chi(2)=1,\\ 
		-4, &~~\mathrm{if}~ \chi(2)=-1.
	\end{array}\right.
\end{equation*}
In the following, we investigate the values of $\delta_{-1}(_{-1}\Delta(-\frac{1}{2}))$.
\begin{lemma} We have,
	
	(i) if $\chi(2)=1$, then $\delta_{-1}(4)=1$ when $\chi(5)=-1$, and $\delta_{-1}(4)=3$ when $\chi(5)=1$,
	
	(ii) if $\chi(2)=-1$, then $\delta_{-1}(-4)=1$ when $\chi(5)=-1$, and $\delta_{-1}(-4)=3$ when $\chi(5)=1$.
\end{lemma}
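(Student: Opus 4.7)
The plan is to first pin down the value $b = {}_{-1}\Delta(-1/2)$ explicitly, then count the solutions of the $(-1)$-differential equation at that value via a case analysis on the pair of quadratic characters $(\chi(x+1),\chi(x))$, streamlined by the substitution $z = 2x+1$.

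The first step is routine: since $p^n\equiv 3\pmod 4$ gives $\chi(-1)=-1$ and $d=(p^n-3)/2$ is even, the identity $y^d = \chi(y)y^{-1}$ for $y\ne 0$ yields $(1/2)^d = 2\chi(2)$ and $(-1/2)^d = -2\chi(-1/2) = -2\chi(-1)\chi(2)/\chi(2)^2\cdot\chi(2) = 2\chi(2)$, so ${}_{-1}\Delta(-1/2) = 4\chi(2)$. This gives the value $4$ when $\chi(2)=1$ and $-4$ when $\chi(2)=-1$, matching both parts (i) and (ii). Next, I would verify that $x=0$ and $x=-1$ are never solutions of $(x+1)^d+x^d = 4\chi(2)$ when $p^n>3$, since both inputs give value $1$ and $1\ne\pm 4$ in $\gf_{p^n}$ for $p^n>3$.

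For the remaining $x\ne 0,-1$, multiplying the equation by $x(x+1)$ and substituting $z=2x+1$ with $\alpha=\chi(z+1)$, $\beta=\chi(z-1)$, the $\chi(2)$ factors arising from $\chi((z\pm 1)/2)$ cancel neatly with the $\chi(2)$ on the right, reducing everything to the compact identity
\begin{equation*}
(\alpha+\beta)z + (\beta-\alpha) = 2(z^2-1).
\end{equation*}
Splitting on $(\alpha,\beta)\in\{\pm 1\}^2$ yields four explicit quadratic equations. The case $(1,-1)$ forces $z^2=0$, recovering the unique fixed-point solution $x=-1/2$. The cases $(1,1)$ and $(-1,-1)$ yield $z^2-z-1=0$ and $z^2+z-1=0$ respectively, each having discriminant $5$, hence admitting solutions precisely when $\chi(5)=1$. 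In case $(1,1)$ the relation $z_0+1=z_0^2$ gives $\chi(z_0+1)=1$ automatically and $z_0(z_0-1)=1$ gives $\chi(z_0-1)=\chi(z_0)$; since the two roots multiply to $-1$, exactly one has $\chi(z_0)=1$, so the case contributes one solution. The case $(-1,-1)$ is symmetric via the relations $z_0-1=-z_0^2$ and $z_0+1=z_0^{-1}$, again giving exactly one solution when $\chi(5)=1$.

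The main obstacle is disposing of the case $(-1,1)$: the equation collapses to $z^2=2$, which admits roots $\pm\sqrt 2$ whenever $\chi(2)=1$, so one must argue these never satisfy the sign constraint. The key observation is that $(z-1)(z+1)=z^2-1=1$ is a square, forcing $\chi(z-1)=\chi(z+1)$ and contradicting $\alpha\beta=-1$; hence this case is vacuous independent of $\chi(2)$. Summing the contributions, when $\chi(5)=-1$ only $x=-1/2$ survives, giving $\delta_{-1}(4\chi(2))=1$; when $\chi(5)=1$, the cases $(1,1)$, $(1,-1)$, $(-1,-1)$ each contribute one solution, yielding $\delta_{-1}(4\chi(2))=3$, which proves (i) and (ii) uniformly.
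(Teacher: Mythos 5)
Your proof is correct and follows essentially the same route as the paper: a four-way split on the pair of quadratic characters, reducing each case to a quadratic whose discriminant is $5$ and checking via the root relations which root meets the sign constraints, with the substitution $z=2x+1$ merely letting you treat $\chi(2)=1$ and $\chi(2)=-1$ uniformly where the paper proves the first case and declares the second similar. One small wording fix: the exclusion of $x=0,-1$ should rest on $4\chi(2)\neq 1$ (which holds because $p=3$ forces $\chi(2)=\chi(-1)=-1$ and $p=5$ is ruled out by $p^n\equiv 3\pmod 4$), rather than the literal claim that $1\neq\pm 4$ in $\gf_{p^n}$ for $p^n>3$, which fails for $p\in\{3,5\}$.
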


\begin{proof}
  
We only prove the first statement. The proof of the second one is similar so we omit it. Now we assume that $\chi(2)=1$. It is mentioned that $p\neq 3$ since $\chi(2)=1$.
Consider
\begin{equation}\label{delta4eqn}
	(x+1)^d+x^d=4.
\end{equation}
It is easy to see that $x=0$ and $x=-1$ are not solutions of (\ref{delta4eqn}). For $x\neq0,-1$, $x^d=\chi(x)x^{-1}$. 
We distinguish the following four cases.
	
	%, note that the number of solutions for $_{-1}\Delta(x)=b$ in $\chi(x+1)=\chi(x)=1$ and $\chi(x+1)=\chi(x)=-1$ are the same. 
	\begin{enumerate}
	\item Case 1. $(\chi(x+1),\chi(x))=(1,1)$. (\ref{delta4eqn}) becomes $x^2+\frac{1}{2}x-\frac{1}{4}=0$, the discriminant of this quadratic equation is $\frac{5}{4}$. If $\chi(5)=-1$, then (\ref{delta4eqn}) has no solution in this case. If $\chi(5)=1$, the quadratic equation $x^2+\frac{1}{2}x-\frac{1}{4}=0$ has two distinct solutions, namely, $x_1$ and $x_2$. Note that $x_1$ and $x_2$ satisfy $\chi(x_1x_2)=\chi(-\frac{1}{4})=-1$, without loss of generality, we assume $\chi(x_1)=1$ and $\chi(x_2)=-1$. Moreover, $x_1+1=2(x_1+\frac{1}{2})^2$, then $\chi(x_1+1)=1$ since $\chi(2)=1$. Therefore, (\ref{delta4eqn}) has one solution in this case if $\chi(5)=1$.
	
	\item Case 2. $(\chi(x+1),\chi(x))=(1,-1)$. (\ref{delta4eqn}) becomes $x^2+x+\frac{1}{4}=0$, one can easily obtain (\ref{delta4eqn}) has one solution $x=-\frac{1}{2}$ in this case.
	
	\item Case 3. $(\chi(x+1),\chi(x))=(-1,1)$. (\ref{delta4eqn}) becomes $x^2+x-\frac{1}{4}=0$, then $x(x+1)=\frac{1}{4}$, which contradicts to $\chi(x(x+1))=-1$. Hence (\ref{delta4eqn}) has no solution in this case.
	
	\item Case 4. $(\chi(x+1),\chi(x))=(-1,-1)$. Note that $x$ is a solution of (\ref{delta4eqn}) in this case if and only if $-x_0-1$ is a solution in Case 1. Then 
	(\ref{delta4eqn}) has no solution in this case if $\chi(5)=-1$ and one solution in this case if $\chi(5)=1$.
	\end{enumerate}

	By discussions as above, the desired result follows.	
\end{proof}

One can immediately deduce the following corollary.
\begin{corollary}\label{-1omega}
	With the notation as above, we have $_{-1}\omega_{1}=1$ and $_{-1}\omega_{3}=0$ when $\chi(5)=-1$, and $_{-1}\omega_{1}=0$, $_{-1}\omega_{3}=1$ when $\chi(5)=1$.
\end{corollary}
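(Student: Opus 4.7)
The plan is to read off the corollary directly from the preceding lemma together with the involution $x \mapsto -x-1$ that was highlighted just before the lemma. The crucial observation, already noted in the excerpt, is that $x$ solves $(x+1)^d + x^d = b$ if and only if $-x-1$ does; this uses that $d = (p^n-3)/2$ is even because $p^n \equiv 3 \pmod 4$. Consequently, solutions of the $(-1)$-differential equation pair up under this involution, and the only possible fixed point is $x = -1/2$, whose image under $_{-1}\Delta$ is $4\chi(2)$.

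From this symmetry, for any $b \neq 4\chi(2)$ the set of solutions has no fixed point, so $\delta_{-1}(b)$ is even; combined with the bound $_{-1}\Delta_F \leq 4$ of \cite{MRSYZ}, this forces $\delta_{-1}(b) \in \{0,2,4\}$. Hence the only value of $b$ that can contribute to $_{-1}\omega_1$ or $_{-1}\omega_3$ is $b = 4\chi(2)$, which shows at the outset that $_{-1}\omega_1 + _{-1}\omega_3 \leq 1$.

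To finish, I would invoke the preceding lemma, which pins down the value of $\delta_{-1}(4\chi(2))$ precisely: it equals $1$ when $\chi(5) = -1$ and $3$ when $\chi(5) = 1$. Substituting this into the previous paragraph yields $_{-1}\omega_1 = 1$, $_{-1}\omega_3 = 0$ in the first case and $_{-1}\omega_1 = 0$, $_{-1}\omega_3 = 1$ in the second, which is exactly the corollary. There is no real obstacle here; the corollary is a bookkeeping consequence of the involution plus the explicit computation already carried out in the lemma, so the short proof can simply record these two observations.
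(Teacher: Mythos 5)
Your proposal is correct and follows essentially the same route as the paper: the involution $x\mapsto -x-1$ (with unique fixed point $x=-1/2$) shows $\delta_{-1}(b)$ is even for every $b\neq{}_{-1}\Delta(-\frac12)=4\chi(2)$, and the preceding lemma's values $\delta_{-1}(4\chi(2))=1$ or $3$ according to $\chi(5)$ then give the corollary directly.
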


Denote by $_{-1}N_4$ the number of solutions $(x_1,x_2,x_3,x_4)\in(\gf_{p^n})^4$ of the following equation system 
\begin{equation}\label{c-equationsystem}
	\left\{\begin{array}{ll}
	{x_1} - {x_2} + {x_3} - {x_4} &= 0 \\ 
	x_1^d + x_2^d - x_3^d - x_4^d &= 0,
	\end{array}\right.
\end{equation} 
where $d=\frac{p^n-3}{2}$ is a positive integer. To investigate the $(-1)$-differential spectrum of $F$, we should determine $_{-1}N_4$. When $d$ is even, the equation system (\ref{c-equationsystem}) can be rewritten as
\begin{equation*}
	\left\{\begin{array}{ll}
		{x_1} - (-x_3)~+ (-x_2)~- {x_4} &= 0 \\ 
		x_1^d - (-x_3)^d + (-x_2)^d - x_4^d &= 0.
	\end{array}\right.
\end{equation*}
Then by Theorem \ref{N_4value}, \[_{-1}N_{4}=N_4= \left\{ \begin{array}{ll}
	\frac{1}{2}(5p^{2n}+(\lambda_{p,n}-10)p^n-(\lambda_{p,n}-7)),&~\mathrm{if}~\chi(5)=-1,\\
	\frac{1}{2}(5p^{2n}+(\lambda_{p,n}+6)p^n-(\lambda_{p,n}+9)),&~\mathrm{if}~\chi(5)=1,
\end{array} \right.\]
where $p^n\equiv3\pmod4$ and $p^n>3$. The $(-1)$-differential spectrum of $F$ is determined as follows.
\begin{theorem}\label{c-differential spectrum}
	Let $F(x)=x^{\frac{p^{n}-3}{2}}$ be the power function over $\gf_{p^{n}}$, where $p^n\equiv3\pmod4$ and $p^n>3$. The $(-1)$-differential spectrum of $F$ is
	\begin{align*}
		\s=\{&_{-1}\omega_{0}=\frac{1}{16}(9p^n+\lambda_{p,n}-15),\\&_{-1}\omega_{1}=1,\\&
		_{-1}\omega_{2}=\frac{1}{8}(3p^n-\lambda_{p,n}+3),\\&
		_{-1}\omega_{4}=\frac{1}{16}(p^n+\lambda_{p,n}-7)\}
	\end{align*}
	when $\chi(5)=-1$, and is
	\begin{align*}
		\s=\{&_{-1}\omega_{0}=\frac{1}{16}(9p^n+\lambda_{p,n}+9),\\&_{-1}\omega_{2}=\frac{1}{8}(3p^n-\lambda_{p,n}-13),\\&
		_{-1}\omega_{3}=1,\\& _{-1}\omega_{4}=\frac{1}{16}(p^n+\lambda_{p,n}+1)\}
	\end{align*}
	when $\chi(5)=1$, where $\lambda_{p,n}$ was defined in \eqref{sum}. When $p^n\equiv3\pmod4$ and $p^n\neq 7, 27$, the $(-1)$-differential uniformity of $F$ is  $4$. When $p^n=7$ and $p^n=27$, $F$ is AP$_{-1}$N.
\end{theorem}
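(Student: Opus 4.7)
The plan is to imitate the proof of Theorem \ref{differential spectrum}, but using the $c$-differential counterparts of the two moment identities from Theorem \ref{c-identicalequation}, together with the partial information on odd-index frequencies supplied by Corollary \ref{-1omega}.

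First I would observe that $d=\frac{p^n-3}{2}$ is even, since $p^n\equiv 3\pmod{4}$. This has two crucial consequences. On the one hand, the bijection $(x_1,x_2,x_3,x_4)\mapsto(x_1,-x_3,-x_2,x_4)$ of $(\gf_{p^n})^4$ carries the system defining $_{-1}N_4$ into the system (\ref{equationsystem}) defining $N_4$, because $(-y)^d=y^d$ when $d$ is even; hence $_{-1}N_4=N_4$ and Theorem \ref{N_4value} applies directly. On the other hand, from $p^n-1=2d+2$ one reads off $\gcd(d,p^n-1)=\gcd(d,2)=2$. Substituting these two inputs into (\ref{c-i^2omega}) gives an explicit value of $\sum_i i^2\cdot{}_{-1}\omega_i$ in each of the two cases $\chi(5)=\pm1$.

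Next, because $d$ is even, the involution $x\mapsto-x-1$ preserves the equation ${}_{-1}\Delta(x)=(x+1)^d+x^d=b$ and has the unique fixed point $x=-\tfrac12$. Consequently $\delta_{-1}(b)$ is even for every $b$ except $b={}_{-1}\Delta(-\tfrac12)$, so $_{-1}\omega_i=0$ for every odd $i$ save a single odd value where it equals $1$. Combining this with the bound $_{-1}\Delta_F\le 4$ from \cite{MRSYZ} and with Corollary \ref{-1omega} fixes all odd-index frequencies: precisely $_{-1}\omega_1=1$ when $\chi(5)=-1$, and precisely $_{-1}\omega_3=1$ when $\chi(5)=1$. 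What remains is then a $3\times3$ linear system in the unknowns $_{-1}\omega_0,{}_{-1}\omega_2,{}_{-1}\omega_4$ supplied by the two identities in (\ref{c-omegaiomega}) and the second-moment identity computed above; direct inversion yields the stated closed forms.

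Finally, the determination of $_{-1}\Delta_F$ reduces to checking whether $_{-1}\omega_4>0$. By Theorem \ref{bound}, $|\lambda_{p,n}|\le 4p^{n/2}$, so the formulas $_{-1}\omega_4=\frac{p^n+\lambda_{p,n}-7}{16}$ (when $\chi(5)=-1$) and $_{-1}\omega_4=\frac{p^n+\lambda_{p,n}+1}{16}$ (when $\chi(5)=1$) are strictly positive for all sufficiently large $p^n$. Explicit evaluation of $\lambda_{p,n}$ in the finitely many remaining small cases then confirms that $_{-1}\omega_4=0$ precisely when $p^n\in\{7,27\}$ (both of which fall under $\chi(5)=-1$, whence also $_{-1}\omega_3=0$), so $_{-1}\Delta_F=2$ in those exceptional cases and $_{-1}\Delta_F=4$ otherwise. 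The main subtlety in this final step is that the Hasse--Weil bound alone leaves a short finite list of candidates to inspect, and the delicate point is narrowing that list to exactly $\{7,27\}$ via an elementary but case-by-case evaluation of $\lambda_{p,n}$.
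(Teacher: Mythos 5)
Your proposal is correct and follows essentially the same route as the paper: the observation that $d$ even gives $_{-1}N_4=N_4$ and $\gcd(d,p^n-1)=2$, the second-moment identity from Theorem \ref{c-identicalequation}, the parity argument via $x\mapsto -x-1$ together with Corollary \ref{-1omega} to pin down the odd-index frequencies, the resulting $3\times3$ linear system, and the Hasse--Weil bound plus a finite check of small cases (the paper checks $p^n\in\{11,19,23\}$ and evaluates $\lambda_{7,1}=0$, $\lambda_{3,3}=-20$) to settle $_{-1}\Delta_F$. No gaps worth flagging.
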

\begin{proof}
	By \eqref{c-i^2omega} and the value of $_{-1}N_{4}$, we have
	\begin{equation}\label{c-i^2omegaidenty}
		\sum_{i=0}^{_{-1}\Delta_{F}}i^{2}\cdot_{-1}\omega_{i}=\left\{\begin{array}{ll}
		\frac{1}{2}(5p^n+\lambda_{p,n}-9), & ~\mathrm{if}~ \chi(5)=-1 \\ 
		\frac{1}{2}(5p^n+\lambda_{p,n}+7), & ~\mathrm{if}~ \chi(5)=1 .
		\end{array}\right.
	\end{equation}
	since $\gcd(d,p^n-1)=2$, where $\lambda_{p,n}$ was defined in \eqref{sum}. By Corollary \ref{-1omega}, \eqref{c-omegaiomega} and \eqref{c-i^2omegaidenty}, $_{-1}\omega_{0}, _{-1}\omega_{2}$ and $_{-1}\omega_{4}$ satisfy  
	\begin{equation*}
		\left\{\begin{array}{ll}
			_{-1}\omega_{0}+_{-1}\omega_{2}+_{-1}\omega_{4}&=p^n-1\\ 
			2_{-1}\omega_{2}+4_{-1}\omega_{4}&=p^{n}-1\\
			4_{-1}\omega_{2}+16_{-1}\omega_{4}&=\frac{1}{2}(5p^n+\lambda_{p,n}-11)
		\end{array}\right.
	\end{equation*}
	when $\chi(5)=-1$, and they satisfy
	\begin{equation*}
		\left\{\begin{array}{ll}
			_{-1}\omega_{0}+_{-1}\omega_{2}+_{-1}\omega_{4}&=p^n-1\\ 
			2_{-1}\omega_{2}+4_{-1}\omega_{4}&=p^{n}-3\\
			4_{-1}\omega_{2}+16_{-1}\omega_{4}&=\frac{1}{2}(5p^n+\lambda_{p,n}-11)
		\end{array}\right.
	\end{equation*}
	when $\chi(5)=1$. By solving the above two equation systems, the $(-1)$-differential spectrum of $F$ can be determined. Moreover, by Theorem \ref{bound}, we obtain $|\lambda_{p,n}|\leq 4p^{\frac{n}{2}}$ and then
	
	\[_{-1}\omega_{4}\geq \frac{1}{16}(p^n+\lambda_{p,n}-7)\geq \frac{1}{16}(p^n-4p^{\frac{n}{2}}-7)>0\]
	when $p^n\geq 29$.
	The numerical results show that $_{-1}\omega_4\geq 1$ for $p^n\in\{11,19,23\}$. We assert that $_{-1}\omega_4\geq 1$ for all $p^n \equiv 3 (\mathrm{mod}~4)$ and $p^n\neq 3,7,27$. 
	This implies that the $(-1)$-differential uniformity of $F$ is $4$. For $p^n=7$ and $p^n=27$, it can be calculated that $\lambda_{7,1}=0$ and $\lambda_{3,3}=-20$. By the $(-1)$-differential spectrum of $F$, the $(-1)$-differential uniformity of $F$ follows, which completes  the proof.
\end{proof}

%{\bf Remark.} When $p^n=7$, $F(x)=x^2$, the $(-1)$-differential spectrum of $F$ is trivial.  When $p^n=27$, $F(x)=x^{13}$, which is an AP means $F(x)$ is an AP$c$N function in $\gf_{3^3}$.

Below, we present some examples.
\begin{example}
\begin{itemize}
	\item For $p=3$ and  $n=5$, the power function $F(x)=x^{120}$ over $\gf_{3^{5}}$ is differentially $(-1,4)$-uniform with $(-1)$-differential spectrum
	\begin{equation*}
		\s=\{_{-1}\omega_{0}=136, _{-1}\omega_{1}=1, _{-1}\omega_{2}=91, _{-1}\omega_{4}=15\}.
	\end{equation*}

	\item  For $p=7$ and  $ n=3$, the power function $F(x)=x^{170}$ over $\gf_{7^{3}}$ is differentially $(-1,4)$-uniform with $(-1)$-differential spectrum
	\begin{equation*}
		\s=\{_{-1}\omega_{0}=192,_{-1}\omega_{1}=1,_{-1}\omega_{2}=129, _{-1}\omega_{4}=21\}.
	\end{equation*}

	\item For $p=11$ and  $ n=3$, the power function $F(x)=x^{664}$ over $\gf_{11^{3}}$ is differentially $(-1,4)$-uniform with $(-1)$-differential spectrum
	\begin{equation*}
		\s=\{_{-1}\omega_{0}=745,_{-1}\omega_{2}=506,_{-1}\omega_{3}=1,_{-1}\omega_{4}=79\}.
	\end{equation*}
	\end{itemize}
\end{example}

\section{Concluding remarks}\label{E}
In this paper, we investigated the differential properties of the power function $x^{\frac{p^n-3}{2}}$ over $\gf_{p^n}$ with odd $p$. With such specific settings, the studied power functions are APN functions, offering the lowest differential uniformity and the nontrivial differential spectrum.  Firstly, we determined the differential spectrum of $F$ when $p^n\equiv3\pmod4$, $p^n>7$ and $p^n\ne27$ in terms of quadratic character sum $\lambda_{p,n}$. Secondly, we  used similar strategy to provide the $(-1)$-differential spectrum $F$ when $p^n\equiv3\pmod4$ and $p^n>3$. We highlight that our approach to determining the $c$-differential uniformity of $F$ can be used to evaluate the $c$-differential spectrum of other power functions. Deriving low $c$-differentially uniformity functions could be attractive theoretically. Finally, we emphasize that our study on the considered concept, viewed as an extension of the classical concept of the differential spectrum, is based on elliptic curves and Galois theories over finite fields. It contributes to theoretically completing and enriching the literature by evaluating sums of specific characters closely related to elliptic curves and determining the number of solutions of specific equations systems.
 
  %while waiting for more convincing attacks related to real-life cryptography.

%\bibliographystyle{plain}
%\bibliography{2dsref.bib}

\end{document}